\newcommand{\kett}[1]{\left. \left| #1 \right\rangle \right\rangle}
\newcommand{\bbra}[1]{\left\langle \left\langle #1 \right| \right.}
\newcommand{\bbrakett}[1]{\left\langle \braket{#1} \right\rangle}
\newcommand{\vecpq}{\overset{\longrightarrow}{p\, q}}
\newcommand\JS[1]{{\color{black}{#1}}}
\newtheorem{theorem}{Theorem}[section]
\newtheorem{lemma}[theorem]{Lemma}
\newtheorem{corollary}[theorem]{Corollary}
\begin{document}

\dimen\footins=5\baselineskip\relax

\preprint{IQuS@UW-21-086}

\title{Loop-string-hadron approach to SU(3) lattice Yang-Mills theory:\\ I. Hilbert space of a trivalent vertex}

\author{Saurabh V. Kadam}
\email{ksaurabh@uw.edu}
\affiliation{
InQubator for Quantum Simulation (IQuS), Department of Physics, University of Washington, Seattle, Washington 98195, USA}

\author{Aahiri Naskar}
 \email{p20230046@goa.bits-pilani.ac.in}
\affiliation{Department of Physics, BITS-Pilani,
K K Birla Goa Campus, Zuarinagar, Goa 403726, India.}

\author{Indrakshi Raychowdhury}
 \email{indrakshir@goa.bits-pilani.ac.in}
\affiliation{Department of Physics, BITS-Pilani,
K K Birla Goa Campus, Zuarinagar, Goa 403726, India.}
\affiliation{Center for Research in Quantum Information and Technology, Birla Institute of Technology and Science Pilani, K K Birla Goa Campus,  403726, India}

\author{Jesse R.~Stryker}
\email{jstryker@lbl.gov}
\thanks{leading contributor to analytic developments in this manuscript.}
\affiliation{
Physics Division, Lawrence Berkeley National Laboratory, Berkeley, California 94720, USA.}
\affiliation{
Maryland Center for Fundamental Physics and Department of Physics, 
University of Maryland, College Park, Maryland 20742, USA.}

\date{\today}

\begin{abstract}
The construction of gauge invariant states of SU(3) lattice gauge theories has garnered new interest in recent years, but implementing them is complicated by the need for SU(3) Clebsch-Gordon coefficients. In the loop-string-hadron (LSH) approach to lattice gauge theories, the elementary excitations are strictly gauge invariant, and constructing the basis requires no knowledge of Clebsch-Gordon coefficients. Originally developed for SU(2), the LSH formulation was recently generalized to SU(3), but limited to one spatial dimension. In this work, we generalize the LSH approach to constructing the basis of SU(3) gauge invariant states at a trivalent vertex -- the essential building block to multidimensional space. A direct generalization from the SU(2) vertex yields a legitimate basis; however, in certain sectors of the Hilbert space, the naive LSH basis vectors so defined suffer from being nonorthogonal. The issues with orthogonality are directly related to the “missing label” or “outer multiplicity” problem associated with SU(3) tensor products, and may also be phrased in terms of Littlewood-Richardson coefficients or the need for a ``seventh Casimir'' operator.
The states that are unaffected by the problem are orthonormalized in closed form.
For the sectors that are afflicted, we discuss the nonorthogonal bases and their orthogonalization. A few candidates for seventh Casimir operators are readily constructed from the suite of LSH gauge-singlet operators. The diagonalization of a seventh Casimir represents one prescriptive solution toward obtaining a complete orthonormal basis, but a closed-form general solution remains to be found.

\end{abstract}

\maketitle

\section{Introduction}
\noindent
\label{sec:int}
Gauge theories have been the subject of intense focus in fundamental physics for many decades now, yet certain avenues of exploration have remained stuck in nascent stages despite the advent of computers, supercomputers, and most recently exascale computing.
Quantum chromodynamics (QCD), describing the strong force, is a gauge theory of direct relevance to the physics of our universe -- and also the backdrop for a variety of interesting physics scenarios that are computationally intractable ~\cite{Gross:2022hyw}. The development of quantum technology is initiating a paradigm shift in studying gauge theories~\cite{Bauer:2022hpo, DiMeglio:2023nsa, Bauer:2023qgm, Banuls:2019rao, Meurice:2020pxc,Dalmonte:2016alw, Preskill:2018fag, Banuls:2019bmf} where novel quantum simulation and computation algorithms are being developed and are being implemented~\cite{Banerjee:2012pg, Banerjee:2012xg, Huffman:2021gsi, Klco:2018kyo, Klco:2019evd, Ciavarella:2021nmj, Stetina:2020abi, Ciavarella:2021lel, Farrell:2022wyt, Turro:2024pxu, Ciavarella:2024fzw, Farrell:2022vyh, Farrell:2024fit, Farrell:2024mgu, Haase:2020kaj, Dasgupta:2020itb, Bennewitz:2024ixi, Davoudi:2024wyv, Belyansky:2023rgh, Davoudi:2022xmb, Davoudi:2019bhy, Mueller:2022xbg, Davoudi:2022uzo, Nguyen:2021hyk, Atas:2022dqm, Atas:2021ext, Martinez:2016yna, Kasper:2016mzj,  Kane:2022ejm,  Mil:2019pbt, Muschik:2016tws, Meth:2023wzd, Zhang:2021bjq, Paulson:2020zjd, Guo:2024tnb, Crippa:2024cqr, Angelides:2023noe,Chakraborty:2020uhf, Paulson:2020zjd, Halimeh:2023lid, Angelides:2023noe, deJong:2021wsd, Lamm:2019bik, Shaw:2020udc, Kan:2021xfc, Cohen:2021imf, Murairi:2022zdg, Charles:2023zbl, Mildenberger:2022jqr, Halimeh:2022pkw, Zhang:2023hzr, Kane:2023jdo, Yang:2020yer, Zohar:2012ts, Zohar:2012xf, Hauke:2013jga, Wiese:2013uua, Marcos:2013aya, Zohar:2015hwa, Yang:2016hjn, Bender:2018rdp, Luo:2019vmi, Notarnicola:2019wzb, Surace:2019dtp, Surace:2020ycc, Kasper:2020akk, Aidelsburger:2021mia, Surace:2023qwo, Zache:2023cfj, Popov:2023xft, Gonzalez-Cuadra:2022hxt, Illa:2024kmf, Zache:2018jbt, Armon:2021uqr,Byrnes:2005qx, Gonzalez-Cuadra:2017lvz,Tagliacozzo:2012df,Tagliacozzo:2012vg,Zhou:2021kdl, Yamamoto:2022jnn, Zohar:2011cw, Zohar:2016iic, Alexandru:2021jpm,Lamm:2024jnl,Lamm:2019uyc, Riechert:2021ink, Semeghini:2021wls, Kokail:2018eiw, Gonzalez-Cuadra:2024xul, De:2024smi, PintoBarros:2018bzz, Fontana:2022dil, Su:2024uuc, Bender:2020ztu, Fontana:2020xzp, Schweizer_2019,Kalinowski:2022hze} to push the boundary of knowledge \cite{Feynman:1981tf, Lloyd:1996uni, Preskill:2021apy}. Understanding gauge theories by their quantum simulation or computation requires systematic development of suitable frameworks~\cite{Carena:2022kpg, Banuls:2017ena, Raychowdhury:2019iki, Kadam:2022ipf, Pardo:2022hrp, Chandrasekharan:1996ih, Bauer:2021gek, Zohar:2014qma,Zohar:2018cwb,Zohar:2019ygc,Alexandru:2023qzd,Alexandru:2019nsa,Ji:2022qvr,Ji:2020kjk,Kavaki:2024ijd,DAndrea:2023qnr,Romiti:2023hbd,Zache:2023dko,Wiese:2021djl, Kaplan:2018vnj, Mathur:2015wba, Mathur:2023lky, Fontana:2024rux, Grabowska:2024emw,Hartung:2022hoz} that are qubit-friendly and cost-effective~\cite{Muller:2023nnk,Ciavarella:2024fzw, Haase:2020kaj, Kane:2022ejm, Surace:2023qwo, Zache:2023cfj, Gonzalez-Cuadra:2022hxt,Illa:2024kmf, Zohar:2016iic, Lamm:2024jnl}. 
The theoretical frameworks are the starting point for algorithm developments for various tasks such as state preparation, time evolution, symmetry protection and other necessary operations, finally leading to calculation of the observables via performing quantum measurements~\cite{Davoudi:2024wyv, Farrell:2023fgd, Kane:2023jdo, Stryker:2021asy, Lamm:2019bik, Surace:2020ycc, Nguyen:2021hyk, Gustafson:2023swx,Lamm:2020jwv, Raychowdhury:2018osk, Mathew:2022nep,Mathew:2024bed, Muschik:2016tws, Atas:2021ext, Atas:2022dqm, Farrell:2024fit, Grabowska:2022uos}.

Quantum computation for a physical system is naturally expressed in a Hamiltonian framework, that requires a choice of basis. For gauge theories, specifically when the gauge group is continuous and non-Abelian, choosing a suitable basis stands as a nontrivial task. The Hamiltonian description of gauge theories is equipped with a local constraint structure which is to be preserved in the dynamics as well~\cite{Kogut:1974ag}. A preferred choice of basis that optimizes space resources to study quantum dynamics is a gauge invariant one \cite{Davoudi:2020yln}. However, the gauge invariance for a pure gauge theory is only obtained at the cost of nonlocality; in particular, it is possible to express the entire gauge-invariant Hilbert space of in terms of flux configurations generated by plaquette operators applied to a vacuum state, essentially using Wilson loops as the fundamental degrees of freedom. Mandelstam constraints \cite{Mandelstam:1962us, Mandelstam:1968hz, Mandelstam:1978ed} arise when equivalent loop configurations may be characterized in multiple ways, leading to an overcompleteness problem that has to be resolved. Mandelstam constraints were originally studied as nonlocal constraints but were later recast in a local form using the prepotential formalism for SU(2) \cite{Mathur:2007nu} as well as SU(3) \cite{Anishetty:2009nh}. 
The prepotential framework provides a way of converting the problem of solving Mandelstam constraints to the problem of finding an orthonormal and gauge invariant basis at each lattice site.
The recently developed loop-string-hadron (LSH) approach in SU(2) gauge theory builds upon this by providing a local and orthonormal gauge invariant basis, valid in any dimension~\cite{Raychowdhury:2019iki}.
Generalizing the framework to SU(3) is a crucial next step if it is to aid with quantum simulating and computing quantum chromodynamics in the foreseeable future, but doing so involves a series of nontrivial intermediate steps that are currently being pursued~\cite{Kadam:2022ipf}. The current work provides an intermediate step by prescribing a local and orthonormal basis applicable to pure SU(3) gauge theory in higher dimensions.

The foundation of the LSH framework for SU(2) gauge theory lies in the prepotential (or ``Schwinger boson'') formulation of the same, a reformulation of the original Kogut-Susskind formulation of Hamiltonian lattice gauge theories~\cite{Kogut:1974ag}.
The canonically conjugate variables in the Hamiltonian formulation are the electric fields, the generators of the Lie algebra, and `link operators' which are elements of the gauge group residing on each link of the lattice. While the eigenbasis of the Hamiltonian is essentially never known in closed form, one may work with either the well-understood electric basis or the magnetic (i.e. group-element) basis to study the dynamics of the theory. The former serves as the most conventional basis for lattice gauge theories, and is referred to as the strong coupling eigenbasis. The Hamiltonian framework of gauge theory also possesses an additional structure dictated by the Gauss law constraints. The Hamiltonian commutes with these constraints, implying the dynamics to remain gauge invariant.

The search for a suitable gauge-invariant basis for calculating Hamiltonian dynamics with optimal resources using a classical or quantum computer has been pursued for many years by gauge theorists with varying interest. Wilson loops, particle-pair flux strings, and hadrons belong to the gauge invariant Hilbert space of a gauge theory.
The prepotential formulation based on Schwinger bosons was developed to address the issues with nonlocality and proliferation of the loop basis~\cite{Mathur:2004kr,Mathur:2007nu,Mathur:2010wc,Mathur:2000sv,Raychowdhury:2013rwa}. Properly adapted, the SU(2) prepotential formulation can be used to construct a local, orthonormal, gauge invariant basis that solves nonlocality and the proliferation of loop degrees of freedom, and when combined with staggered matter yields the complete LSH framework for any spatial dimension. Generalization of the same to SU(3) involves multiple layers of complexity. As noted above, Ref.~\cite{Kadam:2022ipf} provided a complete SU(3) LSH framework for a one-dimensional spatial lattice with staggered fermions. The present work is the first building block to the SU(3) LSH construction of higher-dimensional theories.

For a $d$-dimensional Cartesian lattice, a site is connected with $2d$ number of links, each carrying gauge degrees of freedom or an irreducible representation (irrep). For SU(2) gauge theory, each link carries an angular momentum irrep that transforms like a rigid rotor under the two SU(2) generators at the two neighboring sites connected by that particular link. 
Forming an on-site gauge singlet requires solving all the Gauss law constraints (three components for SU(2)), which amounts to adding up $2d$ number of on-site angular momentum irreps to form an irrep with net-zero total angular momentum, following the rules of angular momentum addition. This task can be decomposed into adding two irreps at a time as the fundamental building block. This is geometrically realized as the ``point splitting'' scheme of LSH framework, where a site of a square (cubic) lattice is split into two (four) trivalent vertices, i.e., vertices attached to three edges. A trivalent vertex provides the next complicated scenario from a one-dimensional spatial lattice, where each vertex is only bivalent. Trivalent vertices naturally arise if a 2D continuum theory is discretized on a hexagonal or ``honeycomb'' lattice from the outset. In a recent study \cite{Kavaki:2024ijd}, trivalent vertices for 3D lattices in SU(2) gauge theory were studied, taking them as a starting point to ultimately arrive at triamond lattices as a novel alternative discretization for the continuum theory. In this work, we look into the simplest case of constructing gauge invariant states for SU(3) gauge theory of a trivalent vertex, which is the simplest scenario beyond 1+1 dimensions and a building block to full-fledged multidimensional lattices.
This article is the first of a multi-part series on the loop-string-hadron approach to SU(3) lattice gauge theory in more than 1+1 dimensions, with the present scope being limited to the vertex Hilbert space and associated subtleties.
Our analysis demonstrates how already at a trivalent vertex it is a nontrivial task to establish a basis that is suitable for computation (i.e., is complete and orthonormalized), and we offer a prescriptive solution to that problem that has no need for the infamous Clebsch-Gordon coefficients of SU(3).

The organization of this paper is as follows: The foundation of the LSH framework, SU(3) irreducible Schwinger bosons, is briefly reviewed in section \ref{sec:isb}. This ``irreducible'' construction is necessary because it solves the $\mathrm{Sp}(2,\mathbb{R})$ multiplicity problem that arises from naively constructing SU(3) representations out of ordinary Schwinger bosons.
In section \ref{sec:basis}, we define an LSH basis in analogy to SU(2) by contracting Schwinger bosons into gauge-singlet creation operators. We discuss some properties of the basis states, including why we refer to them as ``naive'', how some of them exhibit degeneracies with respect to the electric-flux irreps, and issues of orthogonality.\footnote{A previous work \cite{Anishetty:2019xge} aimed to obtain local and orthonormal basis for SU(3) by generalizing SU(2) along the same lines. However, effort of reproducing the results has revealed several inconsistencies both in orthogonality and calculated normalization factors, as elaborated in the body of this manuscript. }
In section \ref{sec:nondegen basis}, we further examine those states which are not afflicted by degeneracy and present their orthonormalization.
In section \ref{sec: degenerate}, we discuss the subspaces that exhibit degeneracy, their connection to Littlewood-Richardson coefficients, and the notion of a ``seventh Casimir'' that serves to resolve the degeneracy.
We present a choice of seventh Casimir that appears to implicitly define an orthogonal basis, but remains to be fully understood from an analytical point of view.
We discuss the significance of this analysis and state the road map for proceeding in the future in section \ref{sec: discussion}.

\section{SU(3) Prepotentials: Irreducible Schwinger Bosons}
\noindent
\label{sec:isb}

As mentioned before, the LSH framework is derived from the prepotential framework, which also goes by the name of Schwinger bosons. SU(3) prepotentials are different than the naive Schwinger boson representation of SU(3) algebra, which are adequate for SU(2). In this section we briefly introduce SU(3) irreducible Schwinger bosons, which serve as prepotentials for the theory and are used later in this work to construct a gauge-singlet local basis.

\subsection{Revisiting SU(2)}
Before proceeding to SU(3), it is instructive to have a quick look at the original SU(2) Schwinger boson construction for higher dimensional pure gauge theory \cite{Mathur:2004kr, Mathur:2007nu}, which directly precipitates the SU(2) LSH basis construction in any number of dimensions.
For a site $x$, the electric fields $E^a(x,I)$ are associated with the links connected to that particular site along directions $I=1,2,...,R$. Note that here $R$ denotes the connectivity of the lattice. For a $d$-dimensional hypercubic lattice, $R=2d$, whereas for a hexagonal lattice, $R=3$. \emph{In the rest of this work, we consider $R=3$, which corresponds to each site $x$ being a trivalent vertex.}\footnote{$R=3$ lattice sites can arise from the corresponding geometry of the lattice discretization or from a point-split hypercubic lattice~\cite{Raychowdhury:2018tfj, Raychowdhury:2019iki}.}

The electric fields must satisfy the SU(2) algebra:
\begin{eqnarray}
    \left[ E^a(x,I),E^{b}(x',I') \right]=i\epsilon^{abc}E^c(x,I)\delta_{I,I'}\delta_{x,x'},
\end{eqnarray}
where $\epsilon^{abc}$ is the Levi-Civita tensor, and $a,b,c$ can take values $1,2,3$. 
This requirement is satisfied by the construction
\begin{eqnarray}
    E^{a}(x,I)= a^\dagger_\alpha(x,I) {\left(\frac{\sigma^a}{2}\right)^{\alpha} }_\beta a^\beta(x,I),
\end{eqnarray}
where $\sigma^a$ are the Pauli matrices and $a^\dagger_\alpha(x,I)$'s are the independent Schwinger boson doublets $(\alpha,\beta=1,2)$ associated with each link $(x,I)$ that satisfy
\begin{eqnarray}
\label{eq: sb}
    \left[ a^\alpha(x,I), a^\dagger_\beta(x',I') \right]= \delta^\alpha_\beta \; \delta_{x,x'}\;\delta_{I,I'},~~~~ ~~~~ \left[ a^\dagger_\alpha(x,I), a^\dagger_\beta(x',I') \right]=0.
\end{eqnarray}
The association of Schwinger boson modes with legs of the trivalent vertex is illustrated in Fig.~\ref{subfig:SU2-prepotential}.
The strong coupling basis states or the irreps of SU(2) for $(x,I)$ are characterized by $|j,m\rangle_{(x,I)}$, where $j=0,$ $\frac{1}{2},$ $1,\cdots$ is the angular momentum quantum number and $m$ is the associated azimuthal quantum number.
The orthonormal $|j,m\rangle_{(x,I)}$ states are constructed in terms of the SU(2) Schwinger bosons as 
\begin{eqnarray}
\label{su2irrep}
    |j,m\rangle_{(x,I)}\equiv \frac{1}{\sqrt{(j+m)!(j-m)!}} \left(a^\dagger_1(x,I)\right)^{j+m}\left(a^\dagger_2(x,I)\right)^{j-m}|0,0\rangle. 
\end{eqnarray}
At a trivalent vertex $x$,  one can construct the following gauge-singlet creation operators out of SU(2) Schwinger bosons:
\begin{eqnarray}
\label{su2singlets}
    \epsilon^{\alpha\beta} a^\dagger_{\alpha}(x,1) a^\dagger_{\beta}(x,2)~,~~
    \epsilon^{\alpha\beta} a^\dagger_{\alpha}(x,2) a^\dagger_{\beta}(x,3) ~,~~\text{and}~~
    \epsilon^{\alpha\beta} a^\dagger_{\alpha}(x,3) a^\dagger_{\beta}(x,1).
\end{eqnarray}
Each one of these three composite creation operators can be visualized as creating a unit of directionless flux that runs through the vertex along two legs, as depicted in Fig.~\ref{subfig:SU2-singlets}.
The on-site loop Hilbert space is generated by applying arbitrary numbers of these excitations to the Schwinger boson vacuum, leading to the basis vectors
\begin{eqnarray}
\label{su2loopstate}
   |l_{12},l_{23},l_{31}\rangle_x  \propto (\epsilon^{\alpha\beta} a^\dagger_{\alpha}(x,1) a^\dagger_{\beta}(x,2))^{l_{12}}
    (\epsilon^{\alpha\beta} a^\dagger_{\alpha}(x,2) a^\dagger_{\beta}(x,3))^{l_{23}} (
    \epsilon^{\alpha\beta} a^\dagger_{\alpha}(x,3) a^\dagger_{\beta}(x,1))^{l_{31}}|0\rangle_x.
\end{eqnarray}
Here, $|0\rangle$ is the Schwinger boson vacuum -- the ket which is destroyed by (i.e., is in the kernel of) all six Schwinger boson annihilation operators associated with that site. Note that there exist other gauge-singlet operators involving one or more Schwinger boson annihilation operators, which annihilate $|0\rangle$, but have compound actions on a general loop state constructed above. The actions of all such operators can be obtained using the commutation relations among the Schwinger bosons given in~\eqref{eq: sb} (see Ref. \cite{Raychowdhury:2019iki}).

The net electric flux $j_I$ on each link $I=1,2,3$ can also describe the gauge-singlet state defined in (\ref{su2loopstate}) at a vertex $x$. The relation between $j_I$ and the loop quantum numbers, $l_{IJ}$, is given by
\begin{eqnarray}
  &&  l_{12}=j_1+j_2-j_3 ~~,~~ l_{23}=j_2+j_3-j_1 ~~, ~~ 
    l_{31}=j_3+j_1-j_2, \nonumber \\
    &\mbox{or equivalently,}&  2j_1=l_{12}+l_{31} ~~~~,~~~~
    2j_2=l_{12}+l_{23}~~~~,~~~~
    2j_3=l_{23}+l_{31} ~~.~~
\end{eqnarray}
Thus, the local gauge-invariant states are fully characterized by only the SU(2) Casimirs.
This illustrates a trivial identification of the states $|l_{12},l_{23},l_{31}\rangle_x$ with the gauge-invariant orthonormal basis $|j_1,j_2,j_3\rangle_{x}$ constructed by summing over irrep states $|j,m\rangle_{(x,1)} \otimes |j,m\rangle_{(x,2)} \otimes |j,m\rangle_{(x,3)}$ weighted appropriately with Clebsch-Gordon coefficients.
Note that, in the latter case, the gauge-singlet orthonormal basis is given by the states which have zero total angular momentum, and hence, has no independent azimuthal quantum number.
This condition is a signature of no Mandelstam constraint being present as loop states are orthonormal for a trivalent vertex. 

It is insightful to return to a square lattice and reinvestigate the situation. The bijection between the electric flux values {$\{j_I\}$} and loop quantum numbers $\{l_{IJ}\}$ is not there beyond a trivalent vertex. This is a signature of Mandelstam constraint being present, which is well studied in the literature, see Ref.~\cite{Mathur:2007nu}. As an example, for a site on a square lattice with $R=4$, one has four electric flux values $j_1, ~j_2, ~j_3,$ and $ ~ j_4$ and six loop quantum numbers $l_{12},~l_{13},~l_{14},~l_{23},~l_{24},$ and $~l_{34}$. They are related as:
\begin{eqnarray}
    j_1 = l_{12}+l_{13}+l_{14}~~,~~
    j_2 = l_{12}+l_{23}+l_{24} ~~,~~
    j_3 = l_{13}+l_{23}+l_{34} ~~,~~ j_4 = l_{14} + l_{24} + l_{34}.\nonumber
\end{eqnarray}
The left-hand sides of the above equations remain same when the variables $l_{12},~l_{13},~l_{14},~l_{23},~l_{24}, ~l_{34}$ on the right-hand sides are replaced with  \begin{eqnarray}
    l_{12},~l_{13}+p,~ l_{14}-p,~  l_{23}-p, ~ l_{24} +p,~ l_{34},
\end{eqnarray}
for any integer $p$. In other words, multiple possible combinations of the loop quantum numbers at a vertex correspond to the same value for Casimirs on the attached links. 
This is related to the existence of a Mandelstam constraint depicting overcompleteness of the loop basis in this case.
The loop degrees of freedom combined with a suitable point splitting scheme~\cite{Raychowdhury:2018tfj, Raychowdhury:2019iki} for a particular dimension yields exactly the orthonormal and physical states of an SU(2) lattice gauge theory at each point-split trivalent vertex.

\begin{figure}[ht]
    \centering
    \begin{subfigure}[t]{0.47\textwidth}
        \centering
        \includegraphics{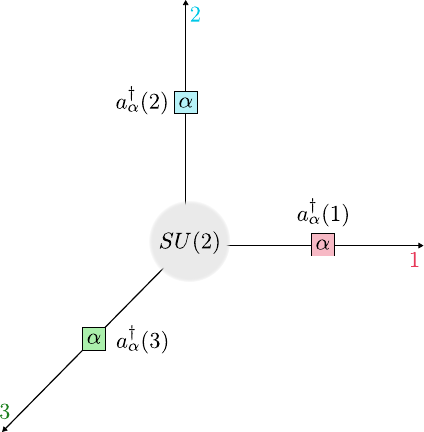}
        \caption{
        }
        \label{subfig:SU2-prepotential}
    \end{subfigure}
    \hfill
    \begin{subfigure}[t]{0.47\textwidth}
        \centering
        \includegraphics{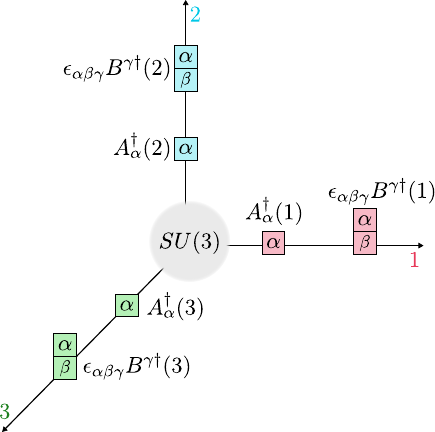}
        \caption{}
        \label{subfig:SU3-prepotential}
    \end{subfigure}
    \caption{Prepotential modes associated with a trivalent vertex in Yang-Mills theory, from which one may construct gauge-invariant
singlet operators by suitable index contractions. (a) SU(2): The prepotentials consist of one doublet of harmonic oscillators (Schwinger
bosons) assigned to each leg $I$. Each doublet transforms in the fundamental irrep, spin $\frac{1}{2}$. (b) SU(3): Each leg $I$ has one triplet of
oscillators in the $(1,0)$ irrep, and one antitriplet in $(0,1)$. The $B^\dagger$ modes are illustrated with the double-box Young tableaux, but their
antisymmetrization with $\epsilon$ renders them as single-index objects, which are used for forming gauge-singlet operators throughout the text.
For example, $\epsilon^{\alpha\beta\gamma}A(I)^\dagger_\alpha\epsilon_{\beta\gamma\sigma}B(J^{\dagger\sigma})\propto A(I)^\dagger_\alpha B(J^{\dagger\alpha})$, with the right-hand side appearing in (\ref{eq: su3singlets}). }
    \label{fig:prepotential}
\end{figure}

\subsection{The SU(3) generalization}
The main focus of this work is constructing SU(3) singlet states at a trivalent vertex. The simple yet successful construction for SU(2) mentioned above may be generalized in a straightforward way for SU(3) gauge theories.
However, such a construction comes with subtleties that will be discussed in detail in Secs.~\ref{sec:basis} and~\ref{sec:nondegen basis}.
In the remaining part of this section, we briefly review the generalization of the Schwinger boson construction for SU(3): the SU(3) prepotentials also known as SU(3) irreducible Schwinger bosons. 

The prepotential formulation of the SU(3) gauge group, which is a rank-two group with a three-dimensional fundamental representation, requires two independent Schwinger boson triplets at both ends of each link. Considering a site $x$, and links along directions $I=1,2,3$, the Schwinger bosons are denoted by $a^\dagger_\alpha(x,I)$ and $b^{\dagger\alpha}(x,I)$. Here, $\alpha$ and $\beta$ are the color indices that take integer values from 1 to 3.  The SU(3) Schwinger bosons obey ordinary bosonic commutation relations:
\begin{align}
    \big[a^\alpha(x,I),a^{\alpha'}(x',I')\big]&=\big[b_\alpha(x,I),b_{\alpha'}(x',I')\big]=0,
    \label{eq: SU3 schwinger boson aa bb commutation}\\
    \big[a^\alpha(x,I),b_{\alpha'}(x',I')\big]&=\big[a^\alpha(x,I),b^{\dagger\alpha'}(x',I')\big]=0,
    \label{eq: SU3 schwinger boson ab commutation}\\
    \big[a^\alpha(x,I),a^\dagger_{\alpha'}(x',I')\big]&=\big[b_{\alpha'}(x,I),b^{\dagger\alpha}(x',I')\big]=\delta_{\alpha '}^{\alpha}\delta_{xx'}\delta_{II'}
    \label{eq: SU3 schwinger boson non-zero commutation},
\end{align}
where $I$ and $I'$ can each take values $1,2,3$ for a trivalent vertex.

Following the Schwinger boson construction for SU(3) generators, the chromoelectric fields are defined in terms of the prepotentials as
\begin{align}
\label{eq: E in terms of a and b}
    E^{\rm a}(x,I)&= a^\dagger_\alpha(x,I)\,\big(T^{\rm a}\big)^\alpha{}_\beta \, a^\beta(x,I) - b^{\dagger\alpha}(x,I)\,\big(T^{*\rm a}\big)_\alpha{}^\beta\, b_\beta(x,I) ,
\end{align}
where $T^{\rm a}$ are one-half times the SU(3) Gell-Mann matrices. As shown in Refs.~\cite{Anishetty:2009ai,Anishetty:2009nh}, this set of ordinary Schwinger bosons is not suitable for the construction of the gauge theory Hilbert space in the above form. Following~\eqref{su2irrep}, a similar monomial of the Schwinger bosons $a^\dagger_{\alpha}$ and $b^{\dagger\beta}$ will likewise create representations of SU(3), however, they are generally not irreducible~\cite{Mukunda1965TensorMA,Chaturvedi:2002si}. The simplest case of the $(1,1)$ irrep of SU(3) is naively constructed as $a^\dagger_{\alpha} b^{\dagger\beta}|0,0\rangle$\footnote{For the remainder of this section, we will be discussing the Schwinger bosons at the same site and link-end. Thus, position and link-end labels on Schwinger bosons are suppressed for brevity.}, however, its proper construction is instead
\begin{equation}
   |1,1\rangle_\alpha^\beta= a^\dagger_{\alpha} b^{\dagger\beta}|0,0\rangle - \frac{\delta_{\alpha}^{\beta}}{3} (a^\dagger \cdot b^\dagger)|0,0\rangle,  \label{eq: SU3-11-irrep}
\end{equation}
such that the irrep is traceless, which is a fundamental property of any irrep. Above, $\alpha\,,\beta=1, 2, 3$ are the color indices,  $a^\dagger\cdot b^\dagger \equiv \sum_{\gamma=1}^3 a^\dagger_\gamma b^{\dagger\gamma}$, and the $(0,0)$ irrep state, $|0,0\rangle$, satisfies $a^\alpha|0,0\rangle = b_{\alpha}|0,0\rangle=0$ for all values of $\alpha$. For a general irrep $(p,q)$, one has to extract out all the traces from the monomial state $a^\dagger_{\alpha_1}\cdots a^\dagger_{\alpha_p} b^{\dagger\beta_1}\cdots b^{\dagger\beta_q}|0,0\rangle $ to satisfy the tracelessness condition (see, for example, equation (35) of Ref.~\cite{Mathur:2000sv}). Such a traceless construction is cumbersome to use, and on top of that, there is also an $\mathrm{Sp}(2,\mathbb{R})$ multiplicity problem because $(a^\dagger\cdot b^{\dagger})^{\rho}\displaystyle|P,Q\rangle_{\vec{\alpha}}^{\vec{\beta}}$ for any positive integer $\rho$ transforms in the same way under SU(3) as the $(p,q)$ irrep \cite{Chaturvedi:2002si}.

A solution to the $\mathrm{Sp}(2,\mathbb{R})$ multiplicity problem is obtained in terms of the irreducible Schwinger boson (ISB) construction ~\cite{Anishetty:2009ai}, 
where the state in~\eqref{eq: SU3-11-irrep} can be constructed as a monomial of ISBs:
\begin{equation}
   |1,1\rangle_\alpha^\beta= A^\dagger_{\alpha} B^{\dagger\beta}|0,0\rangle. \label{eq: SU3-11-ISB}
\end{equation}
Above, $A^\dagger_{\alpha}$ and $ B^{\dagger\beta}$ are the ISB creation operators, constructed in terms of the naive SU(3) Schwinger bosons as
\begin{eqnarray}
    A^{\dagger}_{\alpha}&\equiv& a^{\dagger}_{\alpha}-\frac{1}{P+Q+1}(a^\dagger\cdot b^\dagger)b_{\alpha},
    \label{eq: Adagg-def} \\
    B^{\dagger\alpha}&\equiv& b^{\dagger\alpha}-\frac{1}{P+Q+1}(a^\dagger\cdot b^\dagger)a^{\alpha}.
    \label{eq: Bdagg-def}
\end{eqnarray}
In \eqref{eq: Adagg-def} and \eqref{eq: Bdagg-def}, we have introduced the number operators
\begin{equation}
    P \equiv a^{\dagger}\cdot a = \sum_{\alpha=1}^{3} a^{\dagger}_\alpha\,a^{\alpha} \quad\text{and}\quad Q \equiv b^{\dagger}\cdot b = \sum_{\beta=1}^{3} b^{\dagger\beta}\,b_{\beta}.
    \label{eq: prepotential number operators}
\end{equation}
The association of ISB modes with legs of the trivalent vertex is illustrated in Fig.~\ref{subfig:SU3-prepotential}. Construction of irreps as monomials of ISBs, as done in~\eqref{eq: SU3-11-ISB}, is valid for any arbitrary irrep, and it is given by
\begin{equation}
    |p,q\rangle_{\vec{\alpha}}^{\vec{\beta}}= \mathcal{N} A^\dagger_{\alpha_1}\ldots A^\dagger_{\alpha_p} B^{\dagger\beta_1}\ldots B^{\dagger\beta_q}|0\rangle.
    \label{eq: SU3-irrep}
\end{equation}
 Here, $\mathcal N$ corresponds to the normalization factor, the indices $\alpha_{i}$ with $i=1, \cdots, p$ and  $\beta_{j}$ with $j=1, \cdots, q$ can take integer values between 1 to 3, and $\alpha_{i}$ and $\beta_{j}$ determine the isospin and hypercharge of the irrep state~\cite{Mukunda1965TensorMA,Chaturvedi:2002si}. The monomial states constructed in~\eqref{eq: SU3-irrep} are free from any $\mathrm{Sp}(2,\mathbb{R})$ multiplicity problem as the states satisfy 
\begin{equation}
    A^\dagger\cdot B^\dagger\, |p,q\rangle_{\vec{\alpha}}^{\vec{\beta}} =0  \quad\text{and}\quad A\cdot B\, |p,q\rangle_{\vec{\alpha}}^{\vec{\beta}} =0,
    \label{eq: ISB multiplicity constraint}
\end{equation}
which ensures that they are restricted to the $\rho=0$ subspace or equivalently the kernel of the operator $a\cdot b$; see Ref.~\cite{Anishetty:2009nh} for details. This implies that the operators $A^\dagger\cdot B^\dagger$ and $A\cdot B$ are effectively the null operators within the Hilbert space spanned by states in~\eqref{eq: SU3-irrep}. This leads to the following modified commutation relations for ISBs:
\begin{eqnarray}
    &&[A^\alpha, A^\dagger_\beta] \simeq \left( \delta^\alpha_{\beta}-\frac{1}{P+Q+2}B^{\dagger\alpha}B_\beta \right),
    \label{eq: AAdagg-commutator}\\
    &&[B_\alpha, B^{\dagger\beta}] \simeq  \left( \delta_\alpha^{\beta}-\frac{1}{P+Q+2}A^{\dagger}_{\alpha}A^\beta \right),
    \label{eq: BBdagg-commutator}\\
    && [A^\alpha,B^{\dagger\beta}] \simeq   -\frac{1}{P+Q+2}B^{\dagger\alpha}A^\beta, 
    \label{eq: ABdagg-commutator}\\
    && [B_\alpha,A^\dagger_\beta] \simeq   -\frac{1}{P+Q+2}A^{\dagger}_\alpha B_\beta,
    \label{eq: AdaggB-commutator}
\end{eqnarray}
along with
\begin{eqnarray}
    [ A^\dagger_\alpha, A^\dagger_{\beta}]=[ A^\alpha, A^{\beta}]=[B_\alpha,B_\beta]=[ B^{\dagger\alpha}, B^{\dagger\beta}]=[A^\alpha,B_\beta]=[A^\dagger_\alpha, B^{\dagger\beta} ]=0,
    \label{eq: zero-commutators}
\end{eqnarray}
where $\simeq$ indicates that the above set of commutation relations is valid within the vector subspace spanned by SU(3) irreps defined in~\eqref{eq: SU3-irrep}. Note that, in the same subspace, the number operators for $A$-type and $B-$type ISBs satisfy
\begin{equation}
    A^\dagger\cdot A \simeq P \quad \text{and} \quad B^\dagger\cdot B \simeq Q\,.
    \label{eq: NA Na and NB Nb equivalence}
\end{equation}
The eigenvalues of $A^\dagger\cdot A$ and $B^\dagger\cdot B$ acting on an SU(3) irrep are given by $p$ and $q$, respectively, which serve as the quantum numbers to characterize the irrep as well.

To summarize, the SU(2) and SU(3) prepotentials provide an alternate construction for the group irreps by introducing harmonic oscillator operators and the associated vacuum states.
The prepotentials, depicted in Figs.~\ref{subfig:SU2-prepotential} and~\ref{subfig:SU3-prepotential} for the SU(2) and SU(3) group, respectively, can be used to construct the gauge-singlet excitations at a trivalent vertex.
We showed this for the SU(2) group in this section (see equation \eqref{su2singlets}) and provided a pictorial representation of the degrees of freedom associated with their excitations in Fig.~\ref{subfig:SU2-singlets}.
In the next section, we provide a similar construction for the SU(3) gauge-singlets at a trivalent vertex using the corresponding prepotentials, and discuss the complications associated with it.

\section{SU(3)-invariant basis at the vertex
\label{sec:basis}}
\noindent
\begin{figure}[ht]
    \centering
    \begin{subfigure}[t]{0.47\textwidth}
        \centering
        \includegraphics{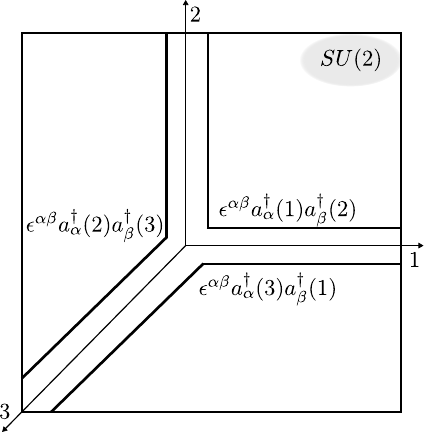}
        \caption{}
        \label{subfig:SU2-singlets}
    \end{subfigure}
    \hfill
    \begin{subfigure}[t]{0.47\textwidth}
        \centering
        \includegraphics{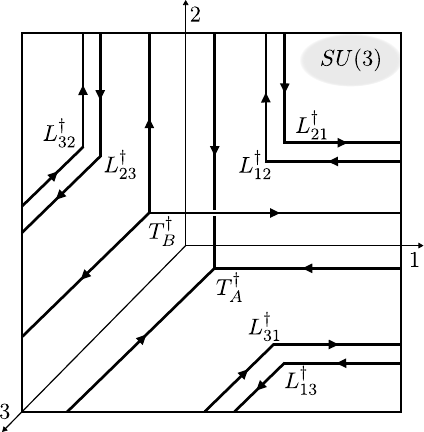}
       \caption{}
        \label{subfig:SU3-singlets}
    \end{subfigure}
    \caption{ On-site singlets constructed out of prepotential operators for the SU(2) and SU(3) trivalent vertices. Young-tableaux
representation of the same is discussed in Fig. \ref{fig:prepotential}. (a) SU(2), being a rank $1$ group, contains only a single fundamental irrep. This is
translated to the fact that SU(2) loops/ Wilson lines are directionless. At a trivalent vertex, flux can flow along only three directions as
shown, consistent with on-site SU(2) singlets constructed using prepotentials as in (\ref{su2singlets}). (b) SU(3), being a rank $2$ group, contains two
fundamental irreps  and $(0, 1)$. With the convention that SU(3) loops/ Wilson lines are always directed from a $(0, 1)$ to a $(1, 0)$, at a
trivalent vertex, flux can flow along six possible directions, as well can emerge/absorb from all three directions, consistent with on-site
SU(3) singlets constructed using prepotentials as in \eqref{eq: su3singlets}.
 }
    \label{fig:singlet}
\end{figure}
Equipped with the irreducible Schwinger bosons, the construction of on-site SU(2) gauge singlet pure-creation operators at a trivalent vertex $x$ as given in (\ref{su2singlets}) can be generalized to SU(3) as below:
\begin{eqnarray}
\label{eq: su3singlets}
    L^\dagger_{12}= A^\dagger_\alpha(1) B^{\dagger\alpha}(2), ~~&& ~~ L^\dagger_{21}= A^\dagger_\alpha(2) B^{\dagger\alpha}(1), \nonumber \\
    L^\dagger_{23}= A^\dagger_\alpha(2) B^{\dagger\alpha}(3), ~~&& ~~ L^\dagger_{32}= A^\dagger_\alpha(3) B^{\dagger\alpha}(2), \nonumber \\
    L^\dagger_{31}= A^\dagger_\alpha(3) B^{\dagger\alpha}(1), ~~&& ~~ L^\dagger_{13}= A^\dagger_\alpha(1) B^{\dagger\alpha}(3), \nonumber \\
    T^\dagger_{A}= \epsilon^{\alpha\beta\gamma}A^\dagger_\alpha(1) A^\dagger_\beta(2)A^\dagger_\gamma(3), ~~&& ~~  T^\dagger_{B}= \epsilon_{\alpha\beta\gamma}B^{\dagger\alpha}(1) B^{\dagger\beta}(2)B^{\dagger\gamma}(3).
\end{eqnarray}
It is often useful to think of the bilinear excitations $L_{IJ}^\dagger$ as creating an oriented unit of gauge flux, entering the vertex from $I$ and exiting along $J$.
For SU(3), it is possible that three units of flux may terminate on or emanate from a point, represented by the trilinear $T_A^\dagger$ and $T_B^\dagger$ excitations, respectively.
All of these elementary excitations are illustrated in Fig.~\ref{subfig:SU3-singlets}.

Arbitrary combinations of the above creation operators can be applied to the strong-coupling vacuum to create the gauge invariant Hilbert space of the vertex.
However, arbitrary products of creation operators cannot be thought of as all being independent.
Indeed, when applied to the Hilbert space created by ISBs, the above set of gauge singlet operators satisfy
\begin{eqnarray}
    \label{eq:TT=LLL+LLL}
    T^\dagger_{A}T^\dagger_{B} \simeq L^\dagger_{12}L^\dagger_{23}L^\dagger_{31}+L^\dagger_{21}L^\dagger_{32}L^\dagger_{13} .
\end{eqnarray}
The above identity can be checked with help from the algebraic identity  
\begin{eqnarray}
    \epsilon_{\alpha_1\beta_1\gamma_1} \epsilon^{\alpha_2\beta_2\gamma_2}=  \left| \begin{array}{ccc}
        \delta_{\alpha_1}^{\alpha_2} &  \delta_{\alpha_1}^{\beta_2} &  \delta_{\alpha_1}^{\gamma_2}\\
         \delta_{\beta_1}^{\alpha_2} &  \delta_{\beta_1}^{\beta_2} &  \delta_{\beta_1}^{\gamma_2}\\
         \delta_{\gamma_1}^{\alpha_2} &  \delta_{\gamma_1}^{\beta_2} &  \delta_{\gamma_1}^{\gamma_2}\\
    \end{array}\right|,
\end{eqnarray}
and the properties of irreducible Schwinger bosons as given in 
\eqref{eq: ISB multiplicity constraint}-\eqref{eq: zero-commutators}.
Equation \eqref{eq:TT=LLL+LLL} indicates that whenever both $T$-type excitations are present, a pair can be deleted and replaced by a linear combination of triple $L$-type excitations instead; we return to this point below.

\begin{figure}
    \centering
    \includegraphics{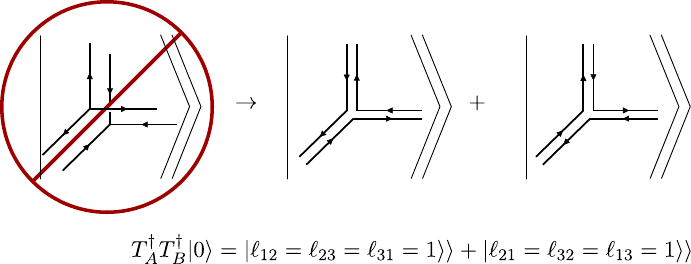}
    \caption{Pictorial representation of the identity (\ref{eq:TT=LLL+LLL}), which implies that the simultaneous presence of $T^\dagger_A$ and $T^\dagger_B$ excitations present at a site is equivalent is equivalent to having a linear combination of flux lines flowing around the site in a cyclic and anticyclic manner. This linear dependence explains why definition \eqref{eq:naive basis kets} was designed to exclude simultaneous $T_A^\dagger$ and $T_B^\dagger$ excitations.
    \label{fig:no-TA-TB}}
\end{figure}
In addition to the purely creation-type operators listed in (\ref{eq: su3singlets}) and their purely annihilation adjoints, there exist numerous other mixed-type gauge singlet operators.
A few of these arise in normalization evaluations of Appendix \ref{app: norm calc}, while the others will be covered later in this series.
Next, we discuss how to construct basis states out of these gauge singlet operators acting on the total bosonic vacuum. 

We define the naive LSH states as \footnote{In this work, any operator raised to the zero power is understood to be shorthand for the identity operator: $(L_{IJ}^\dagger)^0 =  (T_A^\dagger)^0 = (T_B^\dagger)^0 \equiv \mathds{1} $.}$^,$~\footnote{When working with LSH basis kets, it can be useful to group quantum numbers together into the sets $\{ \ell_{12}, \ell_{23}, \ell_{31} \}$, $\{ \ell_{21}, \ell_{32}, \ell_{13} \}$, and $t$ (alone). Therefore, in the definition of \eqref{eq:naive basis kets} and whenever we wish to emphasize this grouping, we use semicolons (;) to separate the groups. Whenever a basis state is written with semicolons or entire quantum numbers omitted, it should be clear from context how to relate the state back to the definition \eqref{eq:naive basis kets}.}
\begin{align}
    \label{eq:naive basis kets}
    \kett{\ell_{12}\, \ell_{23}\, \ell_{31} ; \ell_{21}\, \ell_{32}\, \ell_{13}; t } &\equiv L_{12}^{\dagger \, \ell_{12}} L_{23}^{\dagger \, \ell_{23}} L_{31}^{\dagger \, \ell_{31}} L_{21}^{\dagger \, \ell_{21}} L_{32}^{\dagger \, \ell_{32}} L_{13}^{\dagger \, \ell_{13}} \times \begin{cases}
        T_A^{\dagger \,  t} \ket{0}, & t \geq 0 \\ 
        T_B^{\dagger \, -t} \ket{0}, & t < 0
    \end{cases} \\
    \ell_{IJ} &\in \{ 0, \ 1, \ 2, \ 3, \, \ldots \} , \\
    t &\in \{ 0, \ \pm 1, \ \pm 2 , \ \pm 3, \ \ldots \}.
\end{align}
Above, the notation $\kett{\cdots}$ is used to indicate a basis ket that is not necessarily normalized to unity.
In this notation, the total vacuum ket $\ket{0}$ may also be written as
\begin{align}
    \kett{0} &\equiv \ket{0} = \kett{0 \, 0 \, 0 ; 0 \, 0 \, 0 ; 0 }.
\end{align}
The vacuum is normalized to unity by definition: $\braket{0|0}=1$.
Supplementing the definition of \eqref{eq:naive basis kets}, we also have a pictorial representation of the basis states: Inside a ket, rather than listing the seven quantum numbers, we may instead draw appropriate quantities of all the elementary bilinear and trilinear flux excitations.
(We refer the reader to the beginning of this section and to Fig.~\ref{subfig:SU3-singlets} for relevant conventions.)
As an example of these ``cartoon'' states, we have illustrated a consequence of the Mandelstam identity \eqref{eq:TT=LLL+LLL} in Fig.~\ref{fig:no-TA-TB} and its relation to the basis definition.

It is no surprise that the kets of \eqref{eq:naive basis kets} are generally not normalized, but what is less obvious is the fact that they are not all orthogonal -- a fact that was overlooked in Ref.~\cite{Anishetty:2019xge}.
Thus, the LSH ``quantum numbers'' are not always ``good'' quantum numbers.
It is for this reason that we refer to the states in \eqref{eq:naive basis kets} as the ``naive'' LSH states.
The simplest example where nonorthogonality can be found is in the sector of $p_1=q_1=p_2=q_2=p_3=q_3=1$, as depicted in Fig.~\ref{fig:NZ-overlap}.
The issue of nonorthogonality is of practical importance for at least two reasons:
i) Orthogonality is a great aid to normalization evaluations, and wrongly assuming orthogonality will lead to incorrect normalizations \cite{Anishetty:2019xge}. Such normalization factors arise when evaluating matrix elements of operators.
ii) One would like an orthogonal basis for quantum computation, such that the basis states are distinguishable and can be mapped onto distinct measurements of qubits or qudits.

Despite issues of nonorthogonality, we have found that the above set nonetheless constitutes a valid basis for the Hilbert space of gauge-singlet states of the vertex.
That is, \eqref{eq:naive basis kets} defines a complete basis for the Hilbert space that is linearly independent.
This empirical observation is supported by our computer-assisted investigations and is discussed more below.
Note that the naive basis does not admit any states containing both $T_A^\dagger$ and $T_B^\dagger$ excitations explicitly; following earlier remarks, the identity \eqref{eq:TT=LLL+LLL} implies that, whenever both excitations are present, the state can be decomposed into a linear combination of states with suitably incremented $\ell_{IJ}$ quantum numbers.
The omission of LSH states with simultaneous $T_A^\dagger$ and $T_B^\dagger$ creation operators therefore serves to fulfill the requirement of linear independence, and below we confirm that it leads to the correct number of states.

Nonorthogonality can only arise between states sharing the same $(p_1,q_1,p_2,q_2,p_3,q_3)$ quantum numbers (as is the case in Fig.~\ref{fig:NZ-overlap}).
We shall frequently refer to this collection of six quantum numbers, so we here introduce the shorthand notation
\begin{align}
    \vecpq &\equiv (p_1,q_1,p_2,q_2,p_3,q_3) .
\end{align}
The $\vecpq$ quantum numbers for a given naive basis ket are obtained from the naive LSH quantum numbers as
\begin{subequations}
\label{eq:PQ from llllllt}
\begin{align}
    p_1 &= \ell_{12} + \ell_{13} + |t| \, \theta( t) , \\
    q_1 &= \ell_{31} + \ell_{21} + |t| \, \theta(-t) , \\
    p_2 &= \ell_{23} + \ell_{21} + |t| \, \theta( t) , \\
    q_2 &= \ell_{12} + \ell_{32} + |t| \, \theta(-t) , \\
    p_3 &= \ell_{31} + \ell_{32} + |t| \, \theta( t) , \\
    q_3 &= \ell_{23} + \ell_{13} + |t| \, \theta(-t) ,
\end{align}
\end{subequations}
where $\theta$ denotes the Heaviside function.
Inspection of \eqref{eq:PQ from llllllt} reveals a symmetry of $\vecpq$ under the following transformation:
\begin{align}
\label{eq:PQ shift symmetry}
    \bigl( \ell_{12}, \, \ell_{23}, \, \ell_{31} ; \ell_{21}, \, \ell_{32}, \, \ell_{13}; t \bigr) \to& \bigl( \ell_{12}, \, \ell_{23}, \, \ell_{31} ; \ell_{21}, \, \ell_{32}, \, \ell_{13}; t \bigr) \JS{+n} (1,1,1;-1,-1,-1;0) \nonumber \\
    &\JS{=} \bigl( \ell_{12} + n , \, \ell_{23} + n , \, \ell_{31} + n ; \ell_{21} - n , \, \ell_{32} - n , \, \ell_{13} - n ; t \bigr) , 
\end{align}
where $n$ is any integer satisfying
\begin{align}
    - \min ( \ell_{12}, \, \ell_{23}, \, \ell_{31}) \leq & n  \leq  \min ( \ell_{21}, \, \ell_{32}, \, \ell_{13} ),
\end{align}
ensuring that all loop numbers remain nonnegative. 
This indicates that the quantum numbers $\vecpq$ are insufficient to completely characterize the states. 
Thus, nonorthogonality may arise if two naive basis kets are related by \eqref{eq:PQ shift symmetry}.
Moreover, it will be argued below that all naive basis kets belonging to a common $\vecpq$ sector must be related by \eqref{eq:PQ shift symmetry}.

\begin{figure}[t]
    \centering
    \includegraphics{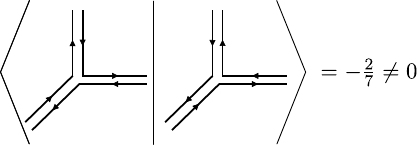}
    \caption{The bra, $\bra{0 \, 0 \, 0 ; 1 \, 1 \, 1 ; 0} $, and the ket, $\ket{1 \, 1 \, 1 ; 0 \, 0 \, 0 ; 0}$ are two normalized naive LSH states characterized by $p_1=q_1=p_2=q_2=p_3=q_3=1$. Although they have distinct LSH labels, explicit calculation confirms that these states are neither parallel nor orthogonal (the angle between them is $\cos^{-1}(-2/7)$) and presents the simplest counterexample to total orthogonality of the $\{\ell_{ij}\}$ and $t$ quantum numbers that was claimed in Ref.~\cite{Anishetty:2019xge}.}
    \label{fig:NZ-overlap}
\end{figure}

Some choices of $\vecpq$ are, however, free from problems with orthogonality, because the subspace they characterize is one-dimensional.
This is the situation for an LSH state that satisifies $\min ( \ell_{12}, \, \ell_{23}, \, \ell_{31}) = \min ( \ell_{21}, \, \ell_{32}, \, \ell_{13} ) =  0$.
We shall refer to these one-dimensional subspaces as being \emph{nondegenerate}, in reference to the $\vecpq$ quantum numbers. 
Multidimensional subspaces will be referred to as being \emph{degenerate}, again in reference to the $\vecpq$ quantum numbers.
In the degenerate sectors, another quantum number is necessary to distinguish orthogonal states.

The need for seven quantum numbers to characterize any general state of the vertex can be understood by the following argument:
The characterization of the state of an individual link end requires five pieces of data, conventionally chosen as $p$, $q$, the total isospin, the third component of isospin, and the hypercharge.
The 15 degrees of freedom belonging to three link ends are then tied together and constrained by eight components of Gauss's law, leading to $15-8=7$ remnant gauge-invariant degrees of freedom.
It is well-known that $p$ and $q$ can be inverted to give eigenvalues of the quadratic and cubic Casimir operators; hence, the specification of $\vecpq$ is equivalent to knowing the eigenvalues of six distinct Casimir operators.
We shall use the term ``seventh Casimir'' to refer to some other independent operator that can be simultaneously diagonalized with the first six, with its eigenvalues being sufficient to distinguish orthogonal states.
Formally, the seventh Casimir should be a Hermitian operator that commutes with the other six Casimir operators and has a nondegenerate spectrum when projected into any given $\vecpq$ sector.

\section{Nondegenerate basis states
\label{sec:nondegen basis}}

Whenever $\dim (\vecpq)=1$, the only LSH state belonging to the one-dimensional subspace is automatically orthogonal to all other naive basis states, and it can be normalized in closed form.
The nondegenerate basis states may be categorized into three classes.
To enumerate them, note that the criterion $\min ( \ell_{12}, \, \ell_{23}, \, \ell_{31}) = \min ( \ell_{21}, \, \ell_{32}, \, \ell_{13} ) =  0$ implies that one of $ \{ \ell_{12}, \, \ell_{23}, \, \ell_{31} \} $ and one of $\{ \ell_{21}, \, \ell_{32}, \, \ell_{13} \}$ must be zero.
The vertex labels $(I,J,K)$ can always be associated with legs 1, 2, and 3 such that $\ell_{IK}$ denotes one of the vanishing $\ell$-type quantum numbers.
We then obtain three classes according to the choice of which member of $\{ \ell_{JI},\ell_{KJ},\ell_{IK} \}$ is also zero.
\begin{itemize}
    \item \emph{Class I}. States of the form $\kett{\ell_{IJ},\ell_{JK},\ell_{JI},\ell_{KJ};t}$, i.e.,
\begin{align}
    \kett{\ell_{IJ},\ell_{JK},\ell_{JI},\ell_{KJ};t} &= (L_{IJ}^{\dagger})^{\ell_{IJ}} (L_{JK}^{\dagger})^{\ell_{JK}} (L_{JI}^{\dagger})^{\ell_{JI}} (L_{KJ}^{\dagger})^{\ell_{KJ}} \times \begin{cases}
        T_A^{\dagger \,  t} \ket{0}, & t \geq 0 \\ 
        T_B^{\dagger \, -t} \ket{0}, & t < 0.
    \end{cases}
\end{align}
    \item \emph{Class IIa}. States of the form $\kett{\ell_{IJ},\ell_{JK},\ell_{JI},\ell_{IK};t}$, i.e.,
\begin{align}
    \kett{\ell_{IJ},\ell_{JK},\ell_{JI},\ell_{IK};t} &= (L_{IJ}^{\dagger})^{\ell_{IJ}} (L_{JK}^{\dagger})^{\ell_{JK}} (L_{JI}^{\dagger})^{\ell_{JI}} (L_{IK}^{\dagger})^{\ell_{IK}} \times \begin{cases}
        T_A^{\dagger \,  t} \ket{0}, & t \geq 0 \\ 
        T_B^{\dagger \, -t} \ket{0}, & t < 0 .
    \end{cases}
\end{align}
    \item \emph{Class IIb}. States of the form $\kett{\ell_{IJ},\ell_{JK},\ell_{KJ},\ell_{IK};t}$, i.e.,
\begin{align}
    \kett{\ell_{IJ},\ell_{JK},\ell_{KJ},\ell_{IK};t} &= (L_{IJ}^{\dagger})^{\ell_{IJ}} (L_{JK}^{\dagger})^{\ell_{JK}} (L_{KJ}^{\dagger})^{\ell_{KJ}} (L_{IK}^{\dagger})^{\ell_{IK}} \times \begin{cases}
        T_A^{\dagger \,  t} \ket{0}, & t \geq 0 \\ 
        T_B^{\dagger \, -t} \ket{0}, & t < 0 .
    \end{cases}
\end{align}
\end{itemize}
Above, we have written the naive basis by suppressing the two $\ell$-type quantum numbers that must be zero.
Classes IIa and IIb are related to each other by  exchanging $A \leftrightarrow B$ for the Schwinger bosons, and $I\leftrightarrow K$ for the vertex legs.
A pictorial representation of some representative states in these nondegenerate classes is given in Fig. \ref{fig:state-classes}.
\begin{figure}
    \centering
    \includegraphics{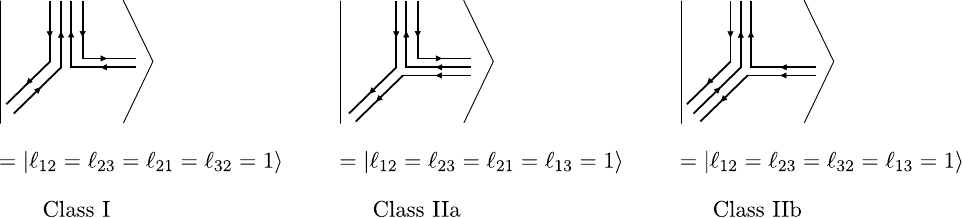}
    \caption{One-dimensional nondegenerate subspaces are states where at least two $\ell_{IJ}$ quantum numbers are zero. Putting aside the trilinear $t$ excitation, any choice of the bilinear $\ell_{IJ}$ quantum numbers can be pictorially represented similar to the above examples.
}
    \label{fig:state-classes}
\end{figure}

Note that nondegenerate basis kets with more than the minimal number of vanishing $\ell$-type quantum numbers belong to multiple classes;
e.g., states of the form $\kett{\ell_{12};t}$ belong to all three classes, while states of the form $\kett{\ell_{23},\ell_{31},\ell_{32};t}$ belong to Classes I and IIa.

The naive basis kets that are nondegenerate can be normalized in closed form by evaluating the norm-squared of the states, using recursion, on a class-by-class basis.
The process to do this is provided in some detail in Appendix \ref{app: norm calc}.
For Class I, the squared norms are evaluated to be
\begin{align}
    &\bbrakett{ \ell_{IJ},\ell_{JK}, \ell_{JI}, \ell_{KJ} ; t | \ell_{IJ},\ell_{JK}, \ell_{JI}, \ell_{KJ} ; t } \nonumber\\
    &= \tfrac{1}{2} (\ell_{IJ}+\ell_{JI}+\ell_{JK}+\ell_{KJ}+|t| + 2) \ell_{IJ}! \, \ell_{JK}! \, \ell_{JI}! \, \ell_{KJ}! \, |t|! \, (\ell_{IJ}+\ell_{KJ}+|t| +1)! \, (\ell_{JK}+\ell_{JI}+|t| +1)! \, .
\end{align}
For Classes IIa and IIb, the squared norms are evaluated to be
\begin{align}
    &\bbrakett{\ell_{IJ}, \ell_{JK}, \ell_{JI} ,\ell_{IK}; t | \ell_{IJ} , \ell_{JK}, \ell_{JI},\ell_{IK}; t } \nonumber \\
    &= \frac{1}{2} \tfrac{(\ell_{IJ}+\ell_{JK}+\ell_{JI}+\ell_{IK}+|t|+2) \, \ell_{IJ}! \, \ell_{JK}! \, \ell_{JI}! \, \ell_{IK}! \, |t| ! \, (\ell_{IJ}+\ell_{IK}+|t|+1)! \, (\ell_{JK}+\ell_{JI}+|t|+1)! \, \binom{\ell_{IJ}+\ell_{JK}+\ell_{JI}+\ell_{IK}+|t|+1}{\ell_{IK}}}{\binom{\ell_{IJ}+\ell_{JI}+\ell_{IK}+|t|+1}{\ell_{IK}}} \, , \\
    &\bbrakett{\ell_{IJ},\ell_{JK},\ell_{KJ},\ell_{IK};t | \ell_{IJ},\ell_{JK},\ell_{KJ},\ell_{IK};t } \nonumber \\
    &= \frac{1}{2} \tfrac{(\ell_{IJ}+\ell_{JK}+\ell_{KJ}+\ell_{IK}+|t|+2) \, \ell_{IJ}! \, \ell_{JK}! \, \ell_{KJ}! \, \ell_{IK}! \, |t| ! \, (\ell_{IJ}+\ell_{KJ}+|t|+1)! \, (\ell_{JK}+\ell_{IK}+|t|+1)! \, \binom{\ell_{IJ}+\ell_{JK}+\ell_{KJ}+\ell_{IK}+|t|+1}{\ell_{IK}}}{\binom{\ell_{JK}+\ell_{KJ}+\ell_{IK}+|t|+1}{\ell_{IK}}}  \, ,
\end{align}
(with binomial coefficients in the denominators).Equipped with these norms, we can express orthonormal states as
\begin{align}
    & \ket{\ell_{IJ},\ell_{JK}, \ell_{JI}, \ell_{KJ} ; t} = \tfrac{(L_{IJ}^{\dagger})^{\ell_{IJ}} (L_{JK}^{\dagger})^{\ell_{JK}} (L_{JI}^{\dagger})^{\ell_{JI}} (L_{KJ}^{\dagger})^{\ell_{KJ}} \ket{0}}{\sqrt{
\tfrac{1}{2} (\ell_{IJ}+\ell_{JI}+\ell_{JK}+\ell_{KJ}+|t| + 2) \ell_{IJ}! \, \ell_{JK}! \, \ell_{JI}! \, \ell_{KJ}! \, |t|! \, (\ell_{IJ}+\ell_{KJ}+|t| +1)! \, (\ell_{JK}+\ell_{JI}+|t| +1)!}} , \\
    & \ket{\ell_{IJ},\ell_{JK},\ell_{JI},\ell_{IK};t} = \nonumber \\
    & \quad \tfrac{(L_{IJ}^{\dagger})^{\ell_{IJ}} (L_{JK}^{\dagger})^{\ell_{JK}} (L_{JI}^{\dagger})^{\ell_{JI}} (L_{IK}^{\dagger})^{\ell_{IK}} \ket{0}}{
\sqrt{\frac{1}{2} \tfrac{(\ell_{IJ}+\ell_{JK}+\ell_{JI}+\ell_{IK}+|t|+2) \, \ell_{IJ}! \, \ell_{JK}! \, \ell_{JI}! \, \ell_{IK}! \, |t| ! \, (\ell_{IJ}+\ell_{IK}+|t|+1)! \, (\ell_{JK}+\ell_{JI}+|t|+1)! \, \binom{\ell_{IJ}+\ell_{JK}+\ell_{JI}+\ell_{IK}+|t|+1}{\ell_{IK}}}{\binom{\ell_{IJ}+\ell_{JI}+\ell_{IK}+|t|+1}{\ell_{IK}}}}} , \\
    & \ket{\ell_{IJ},\ell_{JK},\ell_{KJ},\ell_{IK};t} = \nonumber \\
    & \quad \tfrac{(L_{IJ}^{\dagger})^{\ell_{IJ}} (L_{JK}^{\dagger})^{\ell_{JK}} (L_{KJ}^{\dagger})^{\ell_{KJ}} (L_{IK}^{\dagger})^{\ell_{IK}} \ket{0}}{
   \sqrt{\frac{1}{2} \tfrac{(\ell_{IJ}+\ell_{JK}+\ell_{KJ}+\ell_{IK}+|t|+2) \, \ell_{IJ}! \, \ell_{JK}! \, \ell_{KJ}! \, \ell_{IK}! \, |t| ! \, (\ell_{JK}+\ell_{IK}+|t|+1)! \, (\ell_{IJ}+\ell_{KJ}+|t|+1)! \, \binom{\ell_{IJ}+\ell_{JK}+\ell_{KJ}+\ell_{IK}+|t|+1}{\ell_{IK}}}{\binom{\ell_{JK}+\ell_{KJ}+\ell_{IK}+|t|+1}{\ell_{IK}}}}} .
\end{align}
Above, the unit-normalized basis kets have been written with the notation $\ket{\cdots}$ to distinguish them from the naive basis kets $\kett{\cdots}$.
The above formulas have been confirmed to agree with software-assisted implementations (see the Supplementary Material \cite{SM}) for a wide variety of choices for the quantum numbers.

\section{\label{sec: degenerate}Degenerate subspaces}

\subsection{SU(3) multiplicities, or Littlewood-Richardson coefficients}

We have noted that, in the case of SU(2), the irreps $j_I$ for $I=1,2,3$ are enough to completely characterize a gauge-invariant state of a trivalent vertex. 
This is because if irreps $j_1$, $j_2$, and $j_3$ can be added to form angular momentum 0, then there is a unique way to do this by way of the Wigner $3j$ symbols.
An equivalent statement is that if the tensor product of $j_1 \otimes j_2$ contains $j_3$, then there is one and only one copy of $j_3$.

For general groups, the tensor product of two irreps $\lambda$ and $\mu$ may contain irreps for which there are multiple copies:
\begin{align}
    \lambda \otimes \mu &= \bigoplus_\nu d_{\lambda,\mu}^\nu \, \nu
\end{align}
where $d_{\lambda,\mu}^\nu$ is the multiplicity of irrep $\nu$, a nonnegative integer also known as a Littlewood-Richardson coefficient (LRC)~\cite{Littlewood1934GroupCA,knuth1970permutations,GopalkrishnaGadiyar:1991ah,2023CMaPh.400..179C}.
This is most familiar for the case of SU(3) in the example of $(1,1) \otimes (1,1)$, which contains two copies of the (1,1) octet: $$d_{(1,1),(1,1)}^{(1,1)}=2.$$
For SU(2), the nonzero LRCs are always equal to one.
But for SU(3), the nonzero LRCs can be any positive integer.
Three irreps $(p_I,q_I)$, for $I=1,2,3$, meeting at the vertex then characterize a subspace with dimension equal to the LRC:
\begin{align}
    \dim ( \vecpq ) &= d_{(p_1,q_1),(p_2 , q_2)}^{(q_3, p_3)} = d_{(p_1,q_1),(p_3 , q_3)}^{(q_2, p_2)} = d_{(p_2,q_2),(p_3 , q_3)}^{(q_1, p_1)} .
\end{align}
Since $\dim ( \vecpq ) > 1$ is possible, the $(p_I,q_I)$ do not in general provide a full characterization of a state, which is to say the irreps alone do not constitute a complete set of commuting observables.

In order to have a basis for the subspace characterized by the $(p_I, q_I)$, one must identify a number of linearly independent states equal to the associated LRC.
Our reference point for evaluating LRCs is the Littlewood-Richardson Calculator \texttt{lrcalc} library available in SageMath~\cite{sagemath}, with suitable processing of the output to make it applicable to SU(3).\footnote{\JS{The output of \texttt{lrcalc.mult} was filtered for Young tableaux with a maximum column height of three; each surviving tableau had columns with height equal to three deleted from it.}}
Numerical checks across all $\vecpq$ sectors we have explored indicate that LRCs can be correctly predicted by assuming that all naive LSH basis kets belonging to a given sector $\vecpq$ form a basis for that sector, and then counting up the number of basis kets.
We will now present how that counting is done. 

Suppose that we have a valid choice of $\vecpq$, i.e., a choice of $\vecpq$ such that $\dim ( \vecpq )\geq 1$, and we wish to find all LSH basis kets that belong to that sector.
We effectively want to invert \eqref{eq:PQ from llllllt} to find solutions to the $\ell_{IJ}$ and $t$ in terms of $\vecpq$.
It is easy to solve for $t$ uniquely because \eqref{eq:PQ from llllllt} can all be combined to obtain
\begin{align}
    t = \frac{1}{3} ( p_1 + p_2 + p_3 - q_1 - q_2 - q_3).
\end{align}
What is not uniquely determined is the choice of $\ell_{IJ}$'s.

Knowing $t$, and how it contributes to the $p_I$ and $q_I$ values, we may focus on the contributions to $p_I$ and $q_I$ coming from the $\ell_{IJ}$'s:
\begin{subequations}
\begin{align}
    p_1' &= \ell_{12} + \ell_{13} = p_1 - |t| \, \theta( t) , \\
    q_1' &= \ell_{31} + \ell_{21} = q_1 - |t| \, \theta(-t) , \\
    p_2' &= \ell_{23} + \ell_{21} = p_2 - |t| \, \theta( t) , \\
    q_2' &= \ell_{12} + \ell_{32} = q_2 - |t| \, \theta(-t) , \\
    p_3' &= \ell_{31} + \ell_{32} = p_3 - |t| \, \theta( t) , \\
    q_3' &= \ell_{23} + \ell_{13} = q_3 - |t| \, \theta(-t) .
\end{align}
\end{subequations}
It is easy to see that there is a linear dependence among these equations because $\sum_I p_I ' = \sum_I q_I '$. 
We can therefore remove one equation, say, that for $q_3'$:
\begin{subequations}
\begin{align}
    p_1' &= \ell_{12} + \ell_{13} , \\
    q_1' &= \ell_{31} + \ell_{21} , \\
    p_2' &= \ell_{23} + \ell_{21} , \\
    q_2' &= \ell_{12} + \ell_{32} , \\
    p_3' &= \ell_{31} + \ell_{32} .
\end{align}
\end{subequations}
This system of five equations has six unknown $\ell_{IJ}$ variables, so it is reasonable to expect a family of solutions.
Since all the $\ell_{IJ}$ are on equal footings, let us \JS{arbitrarily} choose to treat $\ell_{31}$ as a parameter.
\JS{The five equations can then be inverted} to obtain
\begin{subequations}
\label{eq: solution to lij's}
\begin{align}
    \ell_{12} &= \ell_{31} - p_3' + q_2' , \\
    \ell_{23} &= \ell_{31} + p_2' - q_1' , \\
    \ell_{21} &= -\ell_{31} + q_1' , \\
    \ell_{32} &= -\ell_{31} + p_3' , \\
    \ell_{13} &= -\ell_{31} + p_1' + p_3' - q_2' .
\end{align}
\end{subequations}
\JS{One may now freely} choose $\ell_{31}$, so long as all of the other $\ell_{IJ}$'s evaluate to nonnegative integers.
Using the above equations, the requirements that $\ell_{12}$, $\ell_{23}$ and $\ell_{31}$ be nonnegative \JS{lead to}
\begin{align}
    \ell_{31} - p_3' + q_2' & \geq 0 , \\
    \ell_{31} + p_2' - q_1' & \geq 0 , \\
    \ell_{31} &\geq 0 , \\
    \Rightarrow \ell_{31} &\geq \max (p_3' - q_2', q_1' - p_2', 0 ).
\end{align}
Meanwhile, the requirements that $\ell_{21}$, $\ell_{32}$ and $\ell_{13}$ be nonnegative give
\begin{align}
    -\ell_{31} + q_1' &\geq 0 , \\
    -\ell_{31} + p_3' &\geq 0 , \\
    -\ell_{31} + p_1' + p_3' - q_2' &\geq 0 , \\
    \Rightarrow \ell_{31} &\leq \min( q_1', p_3', p_1' + p_3' - q_2' ) .
\end{align}
In the end, we are not interested in the actual value of $\ell_{31}$, but rather the number of valid solutions.
The above bounds indicate that the number of solutions is
\begin{align}
\label{eq: dimension of pq sector}
    \dim (p_1, q_1, p_2, q_2, p_3, q_3 ) &= 1 + \min( q_1', p_3', p_1' + p_3' - q_2' ) - \max (p_3' - q_2', q_1' - p_2', 0 ).
\end{align}

To obtain \eqref{eq: dimension of pq sector}, we assumed that the choices of $(p_I, q_I)$ were valid for forming a singlet configuration.
This derivation in fact gives a test for detecting valid choices of $(p_I, q_I)$:
if the calculated $t$ is an integer, and the limits for $\ell_{31}$ are sensible, i.e., \[\min( q_1', p_3', p_1' + p_3' - q_2' ) \geq \max (p_3' - q_2', q_1' - p_2', 0 ) \geq 0,\] then there exists an LSH basis state with the given $\vecpq$ quantum numbers, and therefore they are valid.

As mentioned earlier, we have found the LSH-state-counting method of evaluating $\dim (\vecpq)$ to agree with the output derived from $\texttt{lrcalc}$ in every sector we have checked.
These data can be aggregated into a summary of the total Hilbert space dimensionality of states up to and including some cutoff value $p_{\textrm{max}}=q_{\textrm{max}}=\Lambda$.
Table \ref{tab: total dimensionality with cutoff} presents these dimensionalities, obtained by both methods of counting, up to $\Lambda=10$.

\begin{table}[t]
    \centering
    \begin{tabular}{c|r}
        $\Lambda=p_{\textrm{max}}=q_{\textrm{max}}$ & Dimensionality \\
        \hline \hline
        1 & 20 \\
        2 & 185 \\
        3 & 1,023 \\
        4 & 4,072 \\
        5 & 12,937 \\
        6 & 34,940 \\
        7 & 83,454 \\
        8 & 181,107 \\
        9 & 363,910 \\
        10 & 686,531
    \end{tabular}
    \caption{The total number of independent states for the trivalent vertex, subject to a cutoff on irreps: $\max(p_1,q_1,p_2,q_2,p_3,q_3)\leq \Lambda$.}
    \label{tab: total dimensionality with cutoff}
\end{table}

One other observation is in order.
Earlier, \JS{it was} remarked that all naive LSH states belonging to a given $\vecpq$ sector must be related by \eqref{eq:PQ shift symmetry}.
This is apparent from the solution to $\ell_{IJ}$'s given in \eqref{eq: solution to lij's}, where we see that increasing the value of $\ell_{31}$ by one induces the same change to $\ell_{12}$ and $\ell_{23}$, while simultaneously decreasing $\ell_{21}$, $\ell_{32}$, and $\ell_{13}$ by one.

\subsection{Overlap matrices}

Within those subspaces for which $\dim (p_1,q_1,p_2,q_2,p_3,q_3)>1$, we find that the naive LSH basis states generally fail to be orthogonal, presenting an obstacle to a complete and orthonormal basis suitable for computation.
To orthogonalize the degenerate subspaces, the most obvious brute-force solution is to apply the Gram-Schmidt procedure on the set of naive basis states.
If one knows the overlaps between all naive basis states belonging to a given sector, they will have sufficient information to construct an orthonormal basis.
In principle, Gram-Schmidt orthogonalization of each $\vecpq$ sector constitutes a solution to the complete construction of an orthonormal basis.
\JS{Unfortunately, that} approach offers \JS{no} insight into the nature of the \JS{overlap} problem and has no connection to a seventh Casimir.

An ideal choice for the seventh Casimir would be a Hermitian operator that i) commutes with all $P_I$ and $Q_I$, ii) always has a nondegenerate spectrum once $\vecpq$ is fixed (that way no further labels are needed), and iii) has eigenstates that can be constructed systematically.
Using the LSH operators, several choices can easily be made that manifestly satisfy (i).
To satisfy (ii), we have deferred to computer algebra software (included in the Supplementary Material \cite{SM}) to confirm the nondegenerate spectrum, and the choices we have examined seem to generally satisfy this criterion.
We have not yet been successful in finding a solution to (iii);
this problem remains open and we return to it in Sec.~\ref{sec: discussion}.

Below, we tabulate the overlaps between naive LSH basis kets in a variety of sectors of fixed $\vecpq$.
This serves a few purposes:
Follow-up works can be done with exact results to compare with, 
the values themselves may be used as a reference,
and patterns might later be identified that help toward an analytic resolution to the orthogonalization problem.
Below, we display results for a selection of two- and three-dimensional sectors, and some additional data for four-dimensional sectors may be found in Appendix \ref{app: extra results}.

\arraycolsep=1.4pt\def\arraystretch{1.2}

\noindent $\vecpq=(1,1,1,1,1,1)$:
\begin{align}
    \left( \begin{array}{c}
        \bbra{0\,0\,0;1\,1\,1;0 } \\
        \bbra{1\,1\,1;0\,0\,0;0 } 
    \end{array} \right)
    \left( \begin{array}{ccc}
        \kett{0\,0\,0;1\,1\,1;0} , &
        \kett{1\,1\,1;0\,0\,0;0}
    \end{array} \right)
    &=
    \left( \begin{array}{cc}
        \frac{56}{3} & \frac{-16}{3} \\
        \frac{-16}{3} & \frac{56}{3} 
    \end{array} \right),
\end{align}

\noindent $\vecpq=(2,1,2,1,2,1)$:
\begin{align}
    \left( \begin{array}{c}
        \bbra{0\,0\,0;1\,1\,1;1 } \\
        \bbra{1\,1\,1;0\,0\,0;1 }
    \end{array} \right)
    \left( \begin{array}{ccc}
        \kett{0\,0\,0;1\,1\,1;1} , &
        \kett{1\,1\,1;0\,0\,0;1}
    \end{array} \right)
    &=
\left(
\begin{array}{cc}
 315 & -45 \\
 -45 & 315 \\
\end{array}
\right), 
\end{align}

\noindent $\vecpq=(2,1,1,2,1,1)$:
\begin{align}
    \label{eq: overlap211211}
    \left( \begin{array}{c}
        \bbra{1\,0\,0;1\,1\,1;0 } \\
        \bbra{2\,1\,1;0\,0\,0;0 }
    \end{array} \right)
    \left( \begin{array}{ccc}
        \kett{1\,0\,0;1\,1\,1;0} , &
        \kett{2\,1\,1;0\,0\,0;0}
    \end{array} \right)
    &=
\left(
\begin{array}{cc}
 \frac{185}{2} & -35 \\
 -35 & 130 \\
\end{array}
\right),
\end{align}

\noindent $\vecpq=(2,1,2,2,1,2)$:
\begin{align}
    \left( \begin{array}{c}
        \bbra{1\,1\,0;1\,1\,1;0 } \\
        \bbra{2\,2\,1;0\,0\,0;0 }
    \end{array} \right)
    \left( \begin{array}{ccc}
      \kett{1\,1\,0;1\,1\,1;0} , &
      \kett{2\,2\,1;0\,0\,0;0}
    \end{array} \right)
    &=
\left(
\begin{array}{cc}
 432 & -216 \\
 -216 & 864 \\
\end{array}
\right),
\end{align}

\noindent $\vecpq=(2,2,2,2,2,2)$:
\begin{align}
    \left( \begin{array}{c}
      \bbra{0\,0\,0;2\,2\,2;0} \\
      \bbra{1\,1\,1;1\,1\,1;0} \\
      \bbra{2\,2\,2;0\,0\,0;0}
    \end{array} \right)
    \left( \begin{array}{ccc}
      \kett{0\,0\,0;2\,2\,2;0} , &
      \kett{1\,1\,1;1\,1\,1;0} , &
      \kett{2\,2\,2;0\,0\,0;0}
    \end{array} \right)
    &=
    \left( \begin{array}{ccc}
        \frac{136512}{25} & -\frac{31104}{25} & \frac{15552}{25} \\
        -\frac{31104}{25} & \frac{47088}{25} & -\frac{31104}{25} \\
        \frac{15552}{25} & -\frac{31104}{25} & \frac{136512}{25} 
    \end{array} \right), 
\end{align}

\noindent $\vecpq=(3,2,2,3,2,2)$:
\begin{align}    
    \left( \begin{array}{c}
      \bbra{1\,0\,0;2\,2\,2;0} \\
      \bbra{2\,1\,1;1\,1\,1;0} \\
      \bbra{3\,2\,2;0\,0\,0;0}
    \end{array} \right)
    \left( \begin{array}{ccc}
      \kett{1\,0\,0;2\,2\,2;0} , &
      \kett{2\,1\,1;1\,1\,1;0} , &
      \kett{3\,2\,2;0\,0\,0;0}
    \end{array} \right)
    &=
    \left( \begin{array}{ccc}
 \frac{200256}{5} & -\frac{65184}{5} & \frac{36288}{5} \\
 -\frac{65184}{5} & \frac{101136}{5} & -\frac{74592}{5} \\
 \frac{36288}{5} & -\frac{74592}{5} & \frac{350784}{5}
    \end{array} \right).
\end{align}

\subsection{Seventh Casimir candidates and orthogonalization}

We have noted that a good seventh quantum number or seventh Casimir would be a Hermitian operator that conserves all the $p_I$ and $q_I$, and has a nondegenerate spectrum within any fixed $\vecpq$ sector.
To conserve all the $(p_I, q_I)$, note that LSH gauge-singlet operators all induce definite changes on $\vecpq$; for example, $L_{12}^\dagger$ applied to a ket increases $p_1$ and $q_2$ by 1 and does not change any of $\{ p_2,q_1,p_3,q_3 \}$, while $T_A$ decreases all $p_I$ by one and leaves all $q_I$ alone.
One can then multiply operators together to form a candidate seventh Casimir that makes no net change on any of the $p_I$ or $q_I$.

Additionally, it may be useful to construct the seventh Casimir such that it destroys states that are necessarily nondegenerate.
Noting that degeneracy requires all $p_I \geq 1$ and $q_I \geq 1$, this property can be enforced by putting to the right-hand side of the operator some product that lowers all $p_I$ and $q_I$
by one;
three obvious choices are $L_{12} L_{23} L_{31} $, $L_{21} L_{32} L_{13} $, and $ T_A T_B $.
If we then multiply with the adjoint of such a product, we will have something that conserves the $P_I$ and $Q_I$ and is manifestly Hermitian:
$(L_{12} L_{23} L_{31}) ^\dagger L_{12} L_{23} L_{31} $, $(L_{21} L_{32} L_{13})^\dagger L_{21} L_{32} L_{13} $, and $(T_A T_B)^\dagger T_A T_B $ are three candidates for the seventh Casimir.
In the following, we will focus on the choice
\newcommand{\cseven}{C_T}
\begin{align}
    C_T &\equiv (T_A T_B)^\dagger T_A T_B .
\end{align}
This is not the only candidate we have studied, but in the low-dimensional subspaces we have calculated it seems to have matrix elements and eigenstates that are relatively simple when compared to other candidates.
Furthermore, the use of $T_A T_B$ seems like a natural choice given its appearance in the special constraint \eqref{eq:TT=LLL+LLL}.

To diagonalize (i.e., find the eigenbasis of) $\cseven$ in a given $\vecpq$ sector, it is sufficient to evaluate the overlap matrix in that sector and the matrix elements of $\cseven$ between the naive basis states.
We let $O$ and $\tilde{C}$ denote the $\dim(\vecpq)\times\dim (\vecpq)$ matrices whose elements are defined by
\begin{align}
    O_{ij} &\equiv \bbrakett{i | j}, \\
    \tilde{C}_{ij} &\equiv \bbra{i} \cseven \kett{j},
\end{align}
where $\kett{j}$ is the $j^{\mathrm{th}}$ naive basis ket in the sector $\vecpq$ (ordered by increasing $\ell_{12}$, $\ell_{23}$, and $\ell_{31}$).
The matrices $O$ and $\tilde{C}$ can be used to obtain the matrix representation of $\cseven$ with respect to the naive basis, that is, the $\dim(\vecpq)\times\dim (\vecpq)$ matrix $[\cseven]_{\mathrm{naive}}$ whose elements $([\cseven]_{\mathrm{naive}})_{ij}$ are defined by
\begin{align}
    \cseven \kett{j} &\equiv \sum_i ([\cseven]_{\mathrm{naive}})_{ij} \kett{i}.
\end{align}
The relationship among these three matrices is $\tilde{C} = O [\cseven]_{\mathrm{naive}} $, or
\begin{align}
    [\cseven]_{\mathrm{naive}} &= O^{-1} \tilde{C} .
\end{align}
One then finds the eigenvalues and eigenvectors of $[\cseven]_{\mathrm{naive}}$, where of course the eigenvectors will have coordinates with respect to the naive basis kets.

For normalization, let $\vec{v}_i$ denote the eigenvectors of $[\cseven]_{\mathrm{naive}}$.
If $S$ is a matrix with components
\begin{align}
    S_{ij} &= (\vec{v}_j)_i,
\end{align}
so that the unnormalized eigenstates are
\begin{align}
    \kett{\phi_i} &= \sum_j S_{ji} \kett{j},
\end{align}
then the eigenstates' squared norms are evaluated as
\begin{align}
    \bbrakett{ \phi_i | \phi_i } &= (S^\dagger O S)_{ii}  \qquad \textrm{(no sum).}
\end{align}
Thus, the normalized eigenstate is
\begin{align}
    \ket{\phi_i} &= \frac{1}{\sqrt{(S^\dagger O S)_{ii}}} \sum_j S_{ji} \kett{j} .
\end{align}

Below we present the diagonalization of $\cseven$ for a variety of low-dimensional sectors.
The results are presented as follows:
In the sector $(p_1,q_1,p_2,q_2,p_3,q_3)$, the eigenvalues of $\cseven$ are denoted by $\mathrm{Spec}_{p_1 q_1 p_2 q_2 p_3 q_3} ( \cseven ) $ and presented in increasing order.
The eigenstates of $\cseven$, $\kett{\phi_n}_{p_1 q_1 p_2 q_2 p_3 q_3}$, are ordered in the same way, with $n=1$ [$n=\dim (p_1,q_1,p_2,q_2,p_3,q_3)$] identifying the eigenvector with the lowest (highest) eigenvalue.

Note that, in all sectors we have explored, we have confirmed the eigenvalues of $\cseven$ to be nondegenerate, and that the lowest eigenvalue is zero.
We have also confirmed some of the nondegenerate states to be in the nullspace of $\cseven$, including nontrivial examples where $\min (p_1,q_1,p_2,q_2,p_3,q_3) \geq 1$.
We believe these properties of $\cseven$---nondegeneracy, and a lowest eigenvalue equal to zero---are applicable to any given $\vecpq$ sector, but we do not have a proof.

\begin{align}
    \mathrm{Spec}_{111111} \bigl( \cseven \bigr) &= \left\{ 0 , \frac{80}{3} \right\} , \\
    \left( \begin{array}{c}
      \kett{\phi_1 }_{111111} \\
      \kett{\phi_2 }_{111111}
    \end{array} \right)
    &= 
\left(
\begin{array}{cc}
 1 & -1 \\
 1 & 1 \\
\end{array}
\right)
    \left( \begin{array}{c}
         \kett{0\,0\,0;1\,1\,1;0} \\
         \kett{1\,1\,1;0\,0\,0;0} \\
    \end{array} \right) .
\end{align}

\begin{align}
    \mathrm{Spec}_{212121} \bigl( \cseven \bigr) &= \left\{ 0 , 90 \right\} , \\
    \left( \begin{array}{c}
      \kett{\phi_1 }_{212121} \\
      \kett{\phi_2 }_{212121}
    \end{array} \right)
    &= 
\left(
\begin{array}{cc}
 1 & -1 \\
 1 & 1 \\
\end{array}
\right)
    \left( \begin{array}{c}
         \kett{0\,0\,0;1\,1\,1;1} \\
         \kett{1\,1\,1;0\,0\,0;1} \\
    \end{array} \right) .
\end{align}

\begin{align}
    \mathrm{Spec}_{211211} \bigl( \cseven \bigr) &= \left\{ 0, \frac{305}{6} \right\} , \\
    \left( \begin{array}{c}
      \kett{\phi_1 }_{211211} \\
      \kett{\phi_2 }_{211211}
    \end{array} \right)
    &= 
\left(
\begin{array}{cc}
 1 & -\frac{23}{38} \\
 1 & 1 \\
\end{array}
\right)
    \left( \begin{array}{c}
         \kett{1\,0\,0;1\,1\,1;0} \\
         \kett{2\,1\,1;0\,0\,0;0} \\
    \end{array} \right) .
\end{align}

\begin{align}
    \mathrm{Spec}_{212212} \bigl( \cseven \bigr) &= \left\{ 0, 108 \right\} , \\
    \left( \begin{array}{c}
      \kett{\phi_1 }_{212212} \\
      \kett{\phi_2 }_{212212}
    \end{array} \right)
    &= 
\left(
\begin{array}{cc}
 1 & -\frac{1}{3} \\
 1 & 1 \\
\end{array}
\right)
    \left( \begin{array}{c}
         \kett{1\,1\,0;1\,1\,1;0} \\
         \kett{2\,2\,1;0\,0\,0;0} \\
    \end{array} \right) .
\end{align}

\begin{align}
    \mathrm{Spec}_{222222} \bigl( \cseven \bigr) &= \left\{0, \frac{1008}{5}, \frac{45684}{125} \right\} , \\
    \left( \begin{array}{c}
      \kett{\phi_1 }_{222222} \\
      \kett{\phi_2 }_{222222} \\
      \kett{\phi_3 }_{222222} \\
    \end{array} \right)
    &= 
\left(
\begin{array}{ccc}
    -\tfrac{37}{208} & 1 & -\tfrac{37}{208} \\
    1 & 0 & -1 \\
    \tfrac{1}{2} & 1 & \tfrac{1}{2} \\
\end{array}
\right)
    \left( \begin{array}{c}
         \kett{0\,0\,0;2\,2\,2;0} \\
         \kett{1\,1\,1;1\,1\,1;0} \\
         \kett{2\,2\,2;0\,0\,0;0} \\
    \end{array} \right) .
\end{align}

\begin{align}
    \mathrm{Spec}_{322322} \bigl( \cseven \bigr) &= \left\{
    0,\tfrac{7}{900} \left(59609-\sqrt{383666161}\right),\tfrac{7}{900} \left(59609+\sqrt{383666161}\right)
    \right\} , \\
    \left( \begin{array}{c}
      \kett{\phi_1 }_{322322} \\
      \kett{\phi_2 }_{322322} \\
      \kett{\phi_3 }_{322322} \\
    \end{array} \right)
    &= 
\left(
\begin{array}{ccc}
    -\frac{101}{334} & 1 & -\frac{64}{501} \\ 
    1 & \frac{31369-\sqrt{383666161}}{24500} & \frac{6869-\sqrt{383666161}}{24500} \\
    \frac{31369-\sqrt{383666161}}{24504} & 1 & \frac{\sqrt{383666161}-6865}{24504} \\
\end{array}
\right)
    \left( \begin{array}{c}
         \kett{1\,0\,0;2\,2\,2;0} \\
         \kett{2\,1\,1;1\,1\,1;0} \\
         \kett{3\,2\,2;0\,0\,0;0} \\
    \end{array} \right) .
\end{align}

\section{Discussion}
\noindent
\label{sec: discussion}

In this paper, we have presented a solution to the construction of SU(3) gauge-singlet operators and an orthonormal basis at a trivalent vertex.
We have done this without any need for SU(3) Clebsch-Gordon coefficients, and the underlying Hilbert space in which calculations are ultimately performed is nothing but the Hilbert space of a collection of harmonic oscillators that are all on the same footing.
The use of irreducible Schwinger bosons at the operator level was necessary to avoid $\mathrm{Sp}(2,\mathbb{R})$ redundancies that would arise from creating representations out of ordinary bosonic creation and annihilation operators.

Certain features carried over from the SU(2) loop-string-hadron formulation:
The construction of elementary excitations by contracting fundamental and/or antifundamental bosonic creation operators together via SU(3) invariant tensors again turns out to provide an elegant way of writing down a complete basis.
In SU(3), the LSH states can be categorized into a few different classes depending on which of the $\ell_{IJ}$ are vanishing.
For the subspace of states belonging to what we have called classes I, IIa, and IIb, we have calculated a closed-form analytic solution to the orthonormal basis states.
These states bear a clear resemblance to the orthonormal loop states of SU(2) Yang-Mills \cite{Raychowdhury:2018tfj,Raychowdhury:2019iki}.

States living outside of Classes I and II, however, differ in an important way from the states of the SU(2) theory.
For these states, the irreps of the vertex legs provide an incomplete characterization of the state, and one must appeal to a seventh Casimir -- a degree of freedom that we interpret as being `internal' to the vertex.
We are not the first group to explore this ``missing label'' or ``multiplicity problem'', that is an old topic of interest in the group theory of SU(3).
As mentioned, one may always deal with the multiplicity problem by applying Gram-Schmidt orthogonalization on the naive LSH basis states, but we propose to alternatively find and diagonalize a Hermitian operator whose spectrum is nondegenerate within any given $\vecpq$ sector.
With the tools available to us from the loop-string-hadron approach, we could easily identify a few candidate Hermitian operators to serve as the seventh Casimir.
The candidate that we have focused on, $\cseven$, seems well-motivated theoretically, and appears to have some nice properties.
In all examples we have checked, for a given $\vecpq$, we have found $\cseven$ to be nondegenerate and to have zero as its lowest eigenvalue.
In some of the low-dimensional sectors, we have found a number of eigenstates of $\cseven$ whose coefficients in the naive basis have simple ratios.
The departures from the simple ratios of coefficients in the higher-dimensional sectors could be a hint that the ideal choice of seventh Casimir is related to our $\cseven$ but somehow different.
Unfortunately, evaluating the matrix elements of $\cseven$ gets computationally expensive quickly as a function of the subspace dimension, such that finding the eigenstates becomes a much slower process than simple Gram-Schmidt orthogonalization.

We note that ultimately what is sought is a systematically constructible orthonormal basis; having this would allow one to evaluate the matrix elements of any gauge-singlet contracted operator in full generality, at which point the underlying basis for computation could be entirely transitioned into seven LSH quantum numbers and one could retire the system of $18$ harmonic oscillators.
We are optimistic that a closed-form solution does exist, although we have not been able to find it yet.
We think finding the eigenstates of a well-chosen Hermitian operator is a promising path toward the solution.
We speculate that what may be missing is some kind of ladder operator that would enable one to reach all eigenstates in a $\vecpq$ sector once one of them has been found.
The fact that there is always a zero in the spectrum of our $\cseven$ seems to support this idea; this would correspond to the state at the ``bottom of the ladder.''
It is perhaps not an accident that the lowest eigenvector of $\cseven$ in the example sectors we have shown always had rational coefficients in the naive basis (up to overall normalization), while we have confirmed higher states to often have irrational coefficients.
Another attractive possibility is that there may be a correspondence of states between different $\vecpq$ sectors;
the knowledge of an orthogonal basis in some sector might be used to construct the orthogonal basis of another sector.
Yet another possibility is that there may be ladder operators that connect orthogonal states in different $\vecpq$ sectors.

In \JS{the follow-up papers of this series}, we will further examine the properties of LSH gauge-singlet operators---particularly those that appear in the plaquette operators of higher-dimensional theories---in combination with bases that are suitable for computation.
\JS{One crucial step in this process will be translating all calculations into a purely LSH framework (freed from the underlying Schwinger boson framework), including the representation of gauge-singlets operators in the LSH basis.}
Additionally, we seek to understand the details of how ``point splitting'' should work for SU(3) vertices with four or six legs, which are relevant to square and cubic lattices, repectively.

\section{Acknowledgments}
The authors would like to thank Anthony Ciavarella, Zohreh Davoudi, David B.~Kaplan, John Lombard and Himadri Mukherjee for insightful conversations at various points throughout this work.
Work by JRS was supported by the U.S. Department of Energy (DOE), Office of Science under contract DE-AC02-05CH11231, partially through Quantum Information Science Enabled Discovery (QuantISED) for High Energy Physics (KA2401032).
JRS and SVK both received support from the U.S. Department of Energy’s Office of Science Early Career Award under award DE-SC0020271, for theoretical developments for simulating lattice gauge theories on quantum computers.
SVK acknowledges support by the U.S. DOE, Office of Science, Office of Nuclear Physics, InQubator for Quantum Simulation (IQuS) (award no. DE-SC0020970), and by the DOE QuantISED program through the theory consortium ``Intersections of QIS and Theoretical Particle Physics'' at Fermilab (Fermilab subcontract no. 666484).
SVK further acknowledges the support from the Department of Physics and the College of Arts and Sciences at the University of Washington.
Research of IR is supported by the  OPERA award (FR/SCM/11-Dec-2020/PHY) from BITS-Pilani, the Start-up Research Grant (SRG/2022/000972) and Core-Research Grant (CRG/2022/007312) from ANRF, India and the cross-discipline research fund (C1/23/185) from BITS-Pilani. AN is supported by the Start-up Research Grant (SRG/2022/000972) from ANRF, India received by IR.

\section{Data Availability}
The data supporting this study's findings are available within the article and the Supplemental Material \cite{SM}. The Supplemental Material includes software for generating (reproducing) the original data. The software and the pre-generated data are also publicly available via GitHub \cite{SM}.

\appendix

\section{\label{app: norm calc}Normalization of nondegenerate states}

In this appendix, we provide partial proofs for the normalization of SU(3) LSH basis kets at a trivalent vertex, for choices of $\vecpq=(p_1, q_1, p_2, q_2, p_3, q_3)$ that are nondegenerate.
[Analogous calculations were discussed in an appendix of Ref.~\cite{Raychowdhury:2019iki} for the SU(2) loop-string-hadron theory with one flavor of fermions.]
As discussed in the main text, such nondegenerate states must have at least two vanishing $\ell_{IJ}$ quantum numbers, and they were organized into classes I and II depending on which quantum numbers out of $\{ \ell_{IJ}\}$ and $t$ are vanishing.
The full proof for either class I or II, in each case involving five or more layers of proof-by-induction, is lengthy and we do not reproduce the complete arguments here.
But to give a taste of how these calculations are done, below we present the norm evaluation for naive basis states having up to two nonzero quantum numbers, $t$ and a single $\ell_{IJ}$, which can serve as the starting point for normalizing either class I or II.

In the norm evaluations, one encounters numerous gauge-singlet operators beyond just the pure-creation gauge singlet operators defined \eqref{eq: su3singlets} and their pure-annihilation adjoints.
For example, there are bilinear gauge-singlet operators constructed as a mixture of a creation and annihilation operators:
\begin{align}
N_{IJ}&\equiv A^\dagger_\alpha(I)A^\alpha(J) , \\
M_{IJ}&\equiv B^{\dagger\alpha}(I)B_\alpha(J) , 
\end{align}
where $I$ and $J$ label two different link ends attached to the vertex.
For the trivalent vertex, $I$ and $J$ each take values from 1 to 3 with $I \not= J$.
We will also make frequent reference to the irrep quantum numbers and a function of them:
\begin{align}
    P_I &= a(I)^{\dagger }_{\alpha} a(I)^\alpha \simeq A(I)^{\dagger }_{\alpha} A(I)^\alpha , \\
    Q_I &= b(I)^{\dagger \, \alpha} b(I)_\alpha \simeq B(I)^{\dagger \, \alpha} B(I)_\alpha , \\
    F_I &\equiv (P_I + Q_I + 2)^{-1} ,
\end{align}
where $\simeq$ means the identity is valid in the Hilbert space of irreducible Schwinger bosons.

The first set of excitations we will normalize is the $t$-type ones:
\begin{align}
\label{eq: t-only states}
    \kett{t} &= (T_A^\dagger )^{|t| \, \theta(t)} (T_B^\dagger )^{|t| \, \theta(-t)} \ket{0}.
\end{align}
It is convenient to first establish the following intermediate results.

\begin{lemma}
For two vertex legs $I$ and $J$, $I\neq J$, and any integer $t\geq 0$, we have
\begin{align}
    N_{IJ} T_A^{\dagger \, t} \ket{0} =0.
\end{align}
\end{lemma}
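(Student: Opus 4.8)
The plan is to compute $A^\alpha(J)\,T_A^{\dagger\,t}\ket{0}$ in closed form and then act with the remaining $A^\dagger_\alpha(I)$ from $N_{IJ}$, exploiting the clash between the total antisymmetry of $\epsilon^{\alpha\beta\gamma}$ in $T_A^\dagger$ and the symmetry of commuting bosonic creation operators on a single leg. The essential simplification is that $T_A^{\dagger\,t}\ket{0}$ is built purely from $A$-type ISBs, so on every leg $K$ it sits in the irrep $(t,0)$; in particular $Q_K=0$, and by $B^\dagger\cdot B\simeq Q$ [Eq.~\eqref{eq: NA Na and NB Nb equivalence}] together with the positivity of $\sum_\nu\lVert B_\nu(K)\ket{\psi}\rVert^2$, every $B_\nu(K)$ annihilates $T_A^{\dagger\,t}\ket{0}$. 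This is precisely what kills the awkward non-canonical correction terms in the ISB algebra.

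First I would establish, by induction on $t$, the identity
\begin{align}
    A^\alpha(J)\,T_A^{\dagger\,t}\ket{0} &= t\,\epsilon^{\mu_1\mu_2\mu_3}\big|_{\mu_J=\alpha}\ \Big(\textstyle\prod_{K\neq J} A^\dagger_{\mu_K}(K)\Big)\,T_A^{\dagger\,t-1}\ket{0},
\end{align}
where the notation means the $J$-th slot of $\epsilon$ is frozen to the free index $\alpha$ and the two surviving creation operators sit on the legs $K\neq J$. The base case $t=0$ is $A^\alpha(J)\ket{0}=0$. For the inductive step I would write $T_A^{\dagger\,t}=T_A^\dagger\,T_A^{\dagger\,t-1}$ and commute $A^\alpha(J)$ across the leftmost factor using $[A^\alpha(J),A^\dagger_{\mu_J}(J)]\simeq\delta^\alpha_{\mu_J}-\frac{1}{P_J+Q_J+2}B^{\dagger\alpha}(J)B_{\mu_J}(J)$ [Eq.~\eqref{eq: AAdagg-commutator}], noting that $A^\alpha(J)$ commutes with the creation operators on the other two legs since the ISB algebra is local to each leg [Eq.~\eqref{eq: zero-commutators}]. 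The $\delta^\alpha_{\mu_J}$ piece produces the stated term (one copy from the leftmost factor plus $t-1$ from the recursion, giving the overall factor $t$), while the correction term carries a $B_{\mu_J}(J)$ that commutes freely past the single remaining $A^\dagger(K)$ (different legs) to act directly on $T_A^{\dagger\,t-1}\ket{0}$, which it annihilates by the $Q_J=0$ argument above.

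With this in hand, acting with $N_{IJ}=A^\dagger_\alpha(I)A^\alpha(J)$ and using $I\neq J$ closes the proof: the surviving product $\prod_{K\neq J}A^\dagger_{\mu_K}(K)$ contains exactly one factor $A^\dagger_{\mu_I}(I)$ on leg $I$, and $N_{IJ}$ contracts the free index $\alpha$ (living in the $J$-slot of $\epsilon$) against a \emph{second} creation operator $A^\dagger_\alpha(I)$ on the \emph{same} leg. The pair $A^\dagger_{\mu_I}(I)A^\dagger_\alpha(I)$ commutes and is hence symmetric under $\mu_I\leftrightarrow\alpha$, whereas these two indices occupy the distinct $I$- and $J$-slots of the totally antisymmetric $\epsilon$; the contraction therefore vanishes identically, yielding $N_{IJ}\,T_A^{\dagger\,t}\ket{0}=0$ for every $I\neq J$.

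The main obstacle is purely the bookkeeping around the ISB algebra: one must be confident that the non-canonical remainders in Eqs.~\eqref{eq: AAdagg-commutator}--\eqref{eq: AdaggB-commutator} genuinely drop out at each recursion step rather than accumulating. The clean way to guarantee this, which I would emphasize, is the global observation that $T_A^{\dagger\,t}\ket{0}$ occupies a $Q_K=0$ sector on every leg, so any $B$-type annihilation operator reaching it returns zero; this lets me discard the $B^{\dagger}B$ remainder without ever tracking the operator-valued prefactors $1/(P_J+Q_J+2)$. Everything else---the commutator shuffling and the final symmetry--antisymmetry cancellation---is routine.
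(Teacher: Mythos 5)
Your proof is correct and rests on the same mechanism as the paper's: induction on $t$ using the ISB commutator \eqref{eq: AAdagg-commutator}, discarding the $B^{\dagger}B$ correction because $T_A^{\dagger\,t}\ket{0}$ lies in the $Q_K=0$ sector of every leg, and killing the surviving term by contracting a symmetric pair of same-leg $A^{\dagger}(I)$ components against the totally antisymmetric $\epsilon$. The only organizational difference is that you run the induction on a closed-form expression for $A^{\alpha}(J)\,T_A^{\dagger\,t}\ket{0}$ and apply $A^{\dagger}_{\alpha}(I)$ only at the end, whereas the paper inducts directly on the vanishing of $N_{IJ}T_A^{\dagger\,t}\ket{0}$ and locates the same antisymmetry cancellation inside the commutator $[N_{IJ},T_A^{\dagger}]$.
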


\begin{proof}
We use induction on $t$.

When $t=0$, the $A(J)$ annihilation operators in $N_{IJ}$ are directly applied to the vacuum state, giving $N_{IJ} \ket{0} = 0$.
Now taking $t\geq1$, we assume $N_{IJ} T_A^{\dagger \, t-1} \ket{0} =0$.
Then we evaluate
\begin{align}
    N_{IJ} T_A^{\dagger \, t} \ket{0} &= N_{IJ} T_A^\dagger T_A^{\dagger \, t-1} \ket{0} \\
    &= [ N_{IJ} , T_A^\dagger ] T_A^{\dagger \, t-1} \ket{0},
\end{align}
where we have used the inductive hypothesis for $t-1$.
We need the commutator
\begin{align}
    [ N_{IJ} , T_A^\dagger ]
    &= [ A(I)^\dagger_\mu A(J)^\mu , \epsilon^{\alpha \beta \gamma} A(1)^\dagger_\alpha A(2)^\dagger_\beta A(3)^\dagger_\gamma ] \\
    &= \epsilon^{\alpha \beta \gamma} A(I)^\dagger_\mu [ A(J)^\mu , A(1)^\dagger_\alpha A(2)^\dagger_\beta A(3)^\dagger_\gamma ] \\
    &= \epsilon^{\alpha \beta \gamma} A(I)^\dagger_\mu 
    \left(
    [ A(J)^\mu , A(1)^\dagger_\alpha ] A(2)^\dagger_\beta A(3)^\dagger_\gamma +
    A(1)^\dagger_\alpha [ A(J)^\mu , A(2)^\dagger_\beta ] A(3)^\dagger_\gamma + \nonumber \right. \\
    & \qquad \qquad \qquad \quad \left. + A(1)^\dagger_\alpha A(2)^\dagger_\beta [ A(J)^\mu , A(3)^\dagger_\gamma ]
    \right) \\
    &\simeq \epsilon^{\alpha \beta \gamma} A(I)^\dagger_\mu 
    \left(
    \delta_{J \, 1} \bigl( \delta^\mu_\alpha - F_1 B(1)^{\dagger \, \mu} B(1)_\alpha \bigr) A(2)^\dagger_\beta A(3)^\dagger_\gamma + \nonumber \right. \\
    & \qquad \qquad \qquad \quad + \delta_{J \, 2} A(1)^\dagger_\alpha \bigl( \delta^\mu_\beta - F_2 B(2)^{\dagger \, \mu} B(2)_\beta \bigr) A(3)^\dagger_\gamma + \nonumber \\
    & \qquad \qquad \qquad \quad \left. + \delta_{J \, 3} A(1)^\dagger_\alpha A(2)^\dagger_\beta \bigl( \delta^\mu_\gamma - F_3 B(3)^{\dagger \, \mu} B(3)_\gamma \bigr)
    \right) .
\end{align}
Looking at the first term of the last line, if $\delta_{J \, 1} \neq 0$, then $I=2$ or $I=3$;
in either case, the term will contain two factors of $A(I)^\dagger$ components, that is a symmetric tensor with two lower indices.
Contraction with the antisymmetric $\epsilon^{\alpha \beta \gamma}$ will then produce zero.
The same logic can be applied to eliminate the second and third terms.
Thus,
\begin{align}
    N_{IJ} T_A^{\dagger \, t} \ket{0} &= [ N_{IJ} , T_A^\dagger ] T_A^{\dagger \, t-1} \ket{0}
    = (0) T_A^{\dagger \, t-1} \ket{0} = 0.
\end{align}
\end{proof}
The next result is proved by a similar, but not identical, argument.
\begin{lemma}
For two vertex legs $I$ and $J$, $I\neq J$, and any integer $p \geq 0$, we have
\begin{align}
    N_{IJ} T_B^{\dagger \, p} \ket{0} =0.
\end{align}
\end{lemma}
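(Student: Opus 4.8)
The plan is to follow the inductive scaffolding of the preceding lemma, doing induction on $p$, but to accommodate the fact that $[N_{IJ},T_B^\dagger]$ will \emph{not} vanish outright. For the base case $p=0$, the operator $A^\alpha(J)$ inside $N_{IJ}=A^\dagger_\alpha(I)A^\alpha(J)$ hits the vacuum directly, so $N_{IJ}\ket{0}=0$. For the inductive step I would assume $N_{IJ}T_B^{\dagger\,p-1}\ket{0}=0$ and write $N_{IJ}T_B^{\dagger\,p}\ket{0}=[N_{IJ},T_B^\dagger]\,T_B^{\dagger\,p-1}\ket{0}$, the remaining $T_B^\dagger N_{IJ}T_B^{\dagger\,p-1}\ket{0}$ piece being annihilated by the hypothesis.

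Next I would evaluate the commutator. Because $A^\dagger_\mu(I)$ commutes with every $B^\dagger$ by \eqref{eq: zero-commutators}, one has $[N_{IJ},T_B^\dagger]=A^\dagger_\mu(I)\,[A^\mu(J),T_B^\dagger]$. Applying the modified ISB relation \eqref{eq: ABdagg-commutator}, $[A^\alpha,B^{\dagger\beta}]\simeq -F\,B^{\dagger\alpha}A^\beta$, only the leg-$J$ factor of $T_B^\dagger$ contributes, producing a single surviving term proportional to $F_J\,B^{\dagger\mu}(J)\,A^\nu(J)$ multiplied by the two remaining $B^\dagger$ operators on the other two legs and contracted with $\epsilon$.

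Here lies the essential difference from the $T_A$ case: there, the commutator produced two symmetric lower color indices contracted against the antisymmetric $\epsilon$ and vanished on the spot, whereas here the surviving term instead carries an $A$-type annihilation operator $A^\nu(J)$. The key maneuver is therefore to transport $A^\nu(J)$ to the far right, past the $B^\dagger$ operators on legs other than $J$ (with which it commutes), so that it acts directly on $T_B^{\dagger\,p-1}\ket{0}$.

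To close the argument I would invoke the auxiliary fact $A^\nu(J)\,T_B^{\dagger\,p-1}\ket{0}=0$. This holds because $T_B^\dagger$ creates no $A$-type quanta: one checks $[P_J,B^{\dagger\alpha}(J)]\simeq -F_J(A^\dagger\!\cdot\!B^\dagger)A^\alpha(J)\simeq 0$ on the ISB subspace using \eqref{eq: ISB multiplicity constraint}, so $T_B^{\dagger\,p}\ket{0}$ is a zero eigenstate of the number operator $P_J=A^\dagger(J)\!\cdot\!A(J)$, and any $A^\nu(J)$, which lowers $P_J$ by one, must annihilate it; the same sub-fact can alternatively be proved by a short independent induction using the identical commutator manipulation. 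The main obstacle I anticipate is bookkeeping rather than conceptual difficulty: unlike for ordinary Schwinger bosons, $[A^\alpha,B^{\dagger\beta}]$ is nonzero and carries the operator-valued coefficient $F_J$, so one must track index placements and the ordering of $F_J$ carefully, and recognize that the result follows from shepherding the annihilation operator rightward rather than from color antisymmetry.
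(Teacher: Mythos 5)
Your proposal is correct and follows essentially the same route as the paper's proof: induction on $p$, reduction to $[N_{IJ},T_B^\dagger]\,T_B^{\dagger\,p-1}\ket{0}$ via the inductive hypothesis, evaluation of the commutator with \eqref{eq: ABdagg-commutator} to leave a term carrying $A^\nu(J)$ with no $A^\dagger(J)$ to its right, and annihilation of that term because $T_B^{\dagger\,p-1}\ket{0}$ is a $P_J=0$ eigenstate. The only cosmetic difference is that you phrase the final step as explicitly commuting $A^\nu(J)$ rightward, whereas the paper simply observes there are no $A(J)^\dagger$ components to its right; the substance is identical.
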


\begin{proof}
We use induction on $p$.

When $p=0$, the $A(J)$ annihilation operators in $N_{IJ}$ are directly applied to the vacuum state, giving $N_{IJ} \ket{0} = 0$.
Now taking $p\geq1$, we assume $N_{IJ} T_B^{\dagger \, p-1} \ket{0} =0$.
Then we evaluate
\begin{align}
    N_{IJ} T_B^{\dagger \, p} \ket{0} &= N_{IJ} T_B^\dagger T_B^{\dagger \, p-1} \ket{0} \\
    &= [ N_{IJ} , T_B^\dagger ] T_B^{\dagger \, p-1} \ket{0},
\end{align}
where we have used the inductive hypothesis for $p-1$.
We need the commutator
\begin{align}
    [ N_{IJ} , T_B^\dagger ]
    &= [ A(I)^\dagger_\mu A(J)^\mu , \epsilon_{\alpha \beta \gamma} B(1)^{\dagger \, \alpha} B(2)^{\dagger \, \beta} B(3)^{\dagger \, \gamma} ] \\
    &= \epsilon_{\alpha \beta \gamma} A(I)^\dagger_\mu [ A(J)^\mu , B(1)^{\dagger \, \alpha} B(2)^{\dagger \, \beta} B(3)^{\dagger \, \gamma} ] \\
    &= \epsilon_{\alpha \beta \gamma} A(I)^\dagger_\mu 
    \left(
    [ A(J)^\mu , B(1)^{\dagger \, \alpha} ] B(2)^{\dagger \, \beta} B(3)^{\dagger \, \gamma} +
    B(1)^{\dagger \, \alpha} [ A(J)^\mu , B(2)^{\dagger \, \beta} ] B(3)^{\dagger \, \gamma} + \nonumber \right. \\
    & \qquad \qquad \qquad \quad \left. + B(1)^{\dagger \, \alpha} B(2)^{\dagger \, \beta} [ A(J)^\mu , B(3)^{\dagger \, \gamma} ]
    \right) \\
    &\simeq \epsilon^{\alpha \beta \gamma} A(I)^\dagger_\mu 
    \left(
    \delta_{J \, 1} \bigl(- F_1 B(1)^{\dagger \, \mu} A(1)^\alpha \bigr) B(2)^{\dagger \, \beta} B(3)^{\dagger \, \gamma} +  \delta_{J \, 2} B(1)^{\dagger \, \alpha} \bigl( - F_2 B(2)^{\dagger \, \mu} A(2)^\beta \bigr) B(3)^{\dagger \, \gamma} + \right. \nonumber \\
    & \qquad \qquad \qquad \quad \left. + \delta_{J \, 3} B(1)^{\dagger \, \alpha} B(2)^{\dagger \, \beta} \bigl( - F_3 B(3)^{\dagger \, \mu} A(3)^\gamma \bigr)
    \right) .
\end{align}
Looking at the first term of the last line, we see a factor of $A(1)^\alpha$ with no components of $A(1)^\dagger$ anywhere to the right of it.
Since $[ N_{IJ} , T_B^\dagger ]$ is being applied to the ket $T_B^{\dagger \, p-1} \ket{0}$, a $P_1=0$ eigenstate, the term carrying $A(1)^\alpha$ must evaluate to zero.
The same logic can be applied to eliminate the second and third terms.
Thus,
\begin{align}
    N_{IJ} T_B^{\dagger \, p} \ket{0} &= [ N_{IJ} , T_B^\dagger ] T_B^{\dagger \, p-1} \ket{0}
    = (0) T_B^{\dagger \, p-1} \ket{0} = 0.
\end{align}

\end{proof}
The two Lemmas above are combined into the following Corollary.
\begin{corollary}
\label{cor: NIJ on t states}
For two vertex legs $I$ and $J$, $I\neq J$, and any integer $t$, we have
\begin{align}
    N_{IJ} \kett{t} &= 0.
\end{align}
\end{corollary}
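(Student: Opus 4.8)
The plan is to obtain the Corollary as an immediate consequence of the two preceding Lemmas by performing a case analysis on the sign of the integer $t$. The key observation is that the definition of the $t$-only state, $\kett{t} = (T_A^\dagger)^{|t|\,\theta(t)}(T_B^\dagger)^{|t|\,\theta(-t)}\ket{0}$, collapses to a pure power of a single creation operator once the sign of $t$ is fixed: the Heaviside factors $\theta(\pm t)$ switch exactly one of the two exponents to zero.

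First I would dispose of the case $t > 0$. Here $|t|\,\theta(t) = t$ and $|t|\,\theta(-t) = 0$, so $\kett{t} = (T_A^\dagger)^t\ket{0}$, and the first Lemma above (applied with the nonnegative integer $t$) gives $N_{IJ}\kett{t} = N_{IJ}(T_A^\dagger)^t\ket{0} = 0$. Symmetrically, for $t < 0$ one has $\kett{t} = (T_B^\dagger)^{|t|}\ket{0}$, and the second Lemma above (applied with the nonnegative integer $p = |t|$) yields $N_{IJ}\kett{t} = 0$. Finally, for $t = 0$ both exponents vanish and $\kett{0} = \ket{0}$; this case is already contained in either Lemma (the $t=0$ base case of the first, or the $p=0$ base case of the second), but it also follows trivially because the $A(J)$ annihilation operator inside $N_{IJ}$ annihilates the vacuum.

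There is essentially no substantive obstacle here: the Corollary is purely a packaging of results already in hand, and the only point requiring care is the bookkeeping at the boundary $t = 0$. One must confirm that the two Lemmas, whose index ranges are $t \geq 0$ and $p \geq 0$ respectively, jointly cover every integer $t$ with no gap and with a consistent treatment of the overlap at zero. Since both Lemmas explicitly include their zero base cases and agree there—both reducing to $N_{IJ}\ket{0}=0$—the union of the three cases is exhaustive and unambiguous, completing the argument.
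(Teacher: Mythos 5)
Your proposal is correct and matches the paper's approach: the paper likewise obtains the Corollary by simply combining the two Lemmas, which cover $t\geq 0$ via $(T_A^\dagger)^t\ket{0}$ and $t<0$ via $(T_B^\dagger)^{|t|}\ket{0}$, with the $t=0$ case covered by either base case. Your explicit bookkeeping at $t=0$ is a harmless elaboration of what the paper leaves implicit.
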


Equipped with \ref{cor: NIJ on t states}, we are ready to evaluate the normalization of $\kett{t}$ states.
\begin{theorem}
The norm-squared of states $\kett{t}$ in \eqref{eq: t-only states} is
\begin{align}
    \label{eq: t state norm squared}
    \bbrakett{t|t} &= \tfrac{1}{2}(|t|+2) \, |t|! \, (|t|+1)! \, (|t|+1)! \nonumber \\
    &= \tfrac{1}{2} (|t|+2)! \, (|t|+1)! \, |t|! \, . 
\end{align}
\end{theorem}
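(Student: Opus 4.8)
The plan is to use the $A\leftrightarrow B$ symmetry noted in the main text to reduce to $t\ge 0$, so that $\kett{t}=(T_A^\dagger)^t\ket{0}$, and then to set up a recursion for the squared norm $f(t)\equiv\bbrakett{t|t}$ by pulling a single $T_A$ through the string of creation operators. The enabling observation is that $T_A$ lowers every $p_i$ by one while $T_A^\dagger$ raises it, so $T_A(T_A^\dagger)^t\ket{0}$ lives in the sector $\vecpq=(t-1,0,t-1,0,t-1,0)$. By the dimension formula \eqref{eq: dimension of pq sector} this sector is one-dimensional (all $\ell_{IJ}$ are forced to zero), hence $T_A(T_A^\dagger)^t\ket{0}=g(t)\,(T_A^\dagger)^{t-1}\ket{0}$ for some scalar $g(t)$. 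Contracting with $\bra{0}T_A^{t-1}$ then gives $f(t)=g(t)\,f(t-1)$ with $f(0)=1$, so the whole theorem reduces to determining $g(t)$.

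To obtain $g(t)$ I would write $T_A(T_A^\dagger)^t=[T_A,T_A^\dagger](T_A^\dagger)^{t-1}+T_A^\dagger\,T_A(T_A^\dagger)^{t-1}$ and act on $\ket{0}$. The second term equals $g(t-1)(T_A^\dagger)^{t-1}\ket{0}$ by the inductive hypothesis. For the first term, note that $[T_A,T_A^\dagger]$ conserves every $P_I$ and $Q_I$, so it maps the one-dimensional $(t-1)$-sector into itself and therefore acts on $(T_A^\dagger)^{t-1}\ket{0}$ as a scalar $h(t-1)$. This yields the clean recursion $g(t)=g(t-1)+h(t-1)$, which together with $f(t)=g(t)f(t-1)$ reduces everything to evaluating the single scalar $h(m)$ by which $[T_A,T_A^\dagger]$ acts on any state with $P_I=m$ and $Q_I=0$.

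The technical heart, and the step I expect to be the main obstacle, is computing $h(m)$. The key simplification is that, because these states carry no $B$ excitations and neither $T_A$ nor $T_A^\dagger$ creates any, every $B^{\dagger}B$ correction in the modified ISB commutators drops out, so the $A$-type operators may be treated as ordinary canonical bosons for this computation. Expanding $T_AT_A^\dagger-T_A^\dagger T_A$ with the $\epsilon$--$\epsilon$ determinant identity, the surviving terms organize by how many legs are normal ordered: the fully contracted term gives the constant $\epsilon_{\alpha\beta\gamma}\epsilon^{\alpha\beta\gamma}=6$; the single-contraction terms reduce via $\epsilon_{\alpha\beta\gamma}\epsilon^{\alpha\beta\rho}=2\delta^\rho_\gamma$ to $2P_I\simeq 2m$ each; and the double-contraction terms produce, for each pair of legs, a diagonal piece $P_IP_J\simeq m^2$ together with a genuine cross-leg operator of the form $N_{23}N_{32}-P_2$. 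The delicate point is the cross-leg pieces, which are disposed of by Corollary~\ref{cor: NIJ on t states}: since $N_{IJ}\kett{t}=0$ for $I\neq J$, the $N_{IJ}N_{JI}$ factor annihilates the state and only a remainder $-P_J\simeq-m$ survives. Collecting the three classes gives $h(m)=6+3(2m)+3(m^2+m)=3(m+1)(m+2)$; a careless treatment here would miss the $-P_J$ remainder and spoil the $(m+1)(m+2)$ factorization.

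Finally I would assemble the pieces. Summing $g(t)=g(t-1)+3t(t+1)$ with $g(1)=6$, using $\sum_{s=1}^{t}s(s+1)=\tfrac13 t(t+1)(t+2)$, gives $g(t)=t(t+1)(t+2)$; telescoping then yields $f(t)=\prod_{s=1}^{t}g(s)=\prod_{s=1}^{t}s(s+1)(s+2)=\tfrac12\,t!\,(t+1)!\,(t+2)!$, which is \eqref{eq: t state norm squared} after replacing $t$ by $|t|$. Everything outside the $h(m)$ evaluation is routine induction and bookkeeping.
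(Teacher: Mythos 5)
Your proposal is correct and follows essentially the same route as the paper's proof: induction on $t$, commuting $T_A$ through the string of $T_A^\dagger$'s, evaluating $\bigl[T_A,T_A^\dagger\bigr]$ on $B$-free states where the ISB corrections drop out, invoking Corollary~\ref{cor: NIJ on t states} to kill the $N_{IJ}N_{JI}$ cross-leg terms, and arriving at the same per-step factor $3(k+1)(k+2)$. Your only departures are organizational --- packaging the recursion through the scalar $g(t)$ justified by the one-dimensionality of the $(t-1,0,t-1,0,t-1,0)$ sector, rather than directly expanding $[T_A,T_A^{\dagger\,t}]$ as a sum over insertion points --- and these are equivalent bookkeeping.
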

\begin{proof}
    
Without loss of generality, assume $t\geq0$;
the proof for $t\leq 0 $ follows identical reasoning.
We prove the formula by induction.

For the base case $t=0$, the squared norm evaluates as
\begin{align}
    \bbrakett{0|0} &= \braket{0|0}= 1,
\end{align}
because the complete bosonic vacuum $\ket{0}$ is normalized to unity by definition.
This agrees with the right-hand side of \eqref{eq: t state norm squared} when evaluated for $|t|=0$.

For the inductive step, take $t\geq 1$ and assume the formula holds for the state $\kett{t-1}$.
The norm evaluation begins with
\begin{align}
    \bbrakett{t|t} &= \bbra{t-1} T_A T_A^\dagger \kett{t-1} \\
    &= \bbra{t-1} T_A T_A^{\dagger \, t} \ket{0} \\
    &= \bbra{t-1} \bigl[ T_A,  T_A^{\dagger \, t} \bigr] \ket{0} \\
    &= \bbra{t-1} \sum_{k=0}^{t-1} T_A^{\dagger \, t-1-k}\bigl[ T_A,  T_A^{\dagger} \bigr] T_A^{\dagger \, k} \ket{0} . \label{eq: t state normalization, summation}
\end{align}
To proceed, one needs the commutator $\bigl[ T_A,  T_A^{\dagger} \bigr]$; but going into the evaluation of this commutator, one should take note of the ket on the right to which this operator is applied: $T_A^{\dagger \, k} \ket{0}$.

One may first expand out the definitions to obtain
\begin{align}
    \left[ T_A,  T_A^{\dagger} \right]
    &= \epsilon^{\lambda \mu \nu} \bigl[ T_A ,\, A(1)^\dagger_\lambda A(2)^\dagger_\mu A(3)^\dagger_\nu \bigr] \\
    &= \epsilon^{\lambda \mu \nu} \left(
    A(1)^\dagger_\lambda A(2)^\dagger_\mu \bigl[ T_A ,\, A(3)^\dagger_\nu \bigr] +
    A(1)^\dagger_\lambda \bigl[ T_A ,\, A(2)^\dagger_\mu \bigr] A(3)^\dagger_\nu +
    \bigl[ T_A ,\, A(1)^\dagger_\lambda \bigr] A(2)^\dagger_\mu A(3)^\dagger_\nu
    \right). \label{eq: TAm TAp commutator as three terms}
\end{align}
Here, we will use the information about the ket to the right when evaluating
\begin{align}
    [ T_A , A(3)^\dagger_\nu ]
    &= \epsilon_{\alpha \beta \gamma} [ A(1)^\alpha A(2)^\beta A(3)^\gamma , A(3)^\dagger_\nu ] \\
    &= \epsilon_{\alpha \beta \gamma} A(1)^\alpha A(2)^\beta [ A(3)^\gamma , A(3)^\dagger_\nu ] \\
    &\simeq \epsilon_{\alpha \beta \gamma} A(1)^\alpha A(2)^\beta ( \delta^\gamma_\nu - F_3 B(3)^{\dagger \gamma} B(3)_\nu ) \\
    &\to \epsilon_{\alpha \beta \gamma} A(1)^\alpha A(2)^\beta \delta^\gamma_\nu - 0 \\
    &= \epsilon_{\alpha \beta \nu } A(1)^\alpha A(2)^\beta,
\end{align}
and similarly for the other two terms of \eqref{eq: TAm TAp commutator as three terms}.
So now
\begin{align}
    \left[ T_A,  T_A^{\dagger} \right]
    &\to \epsilon^{\lambda \mu \nu} \bigl(
    A(1)^\dagger_\lambda A(2)^\dagger_\mu \epsilon_{\alpha \beta \nu } A(1)^\alpha A(2)^\beta +
    A(1)^\dagger_\lambda \epsilon_{\alpha \beta \mu } A(3)^\alpha A(1)^\beta A(3)^\dagger_\nu + \nonumber \\
    & \qquad \qquad + \epsilon_{\alpha \beta \lambda } A(2)^\alpha A(3)^\beta A(2)^\dagger_\mu A(3)^\dagger_\nu
    \bigr).
\end{align}
For the first term, it is easy to show that $ \epsilon^{\lambda \mu \nu} \epsilon_{\alpha \beta \nu } A(1)^\alpha A(2)^\beta = A(1)^\lambda A(2)^\mu - A(1)^\mu A(2)^\lambda $, so that
\begin{align}
    \epsilon^{\lambda \mu \nu} A(1)^\dagger_\lambda A(2)^\dagger_\mu \epsilon_{\alpha \beta \nu } A(1)^\alpha A(2)^\beta &= A(1)^\dagger \cdot A(1) A(2)^\dagger \cdot A(2) - A(1)^\dagger_\lambda N_{21} A(2)^\lambda \\
    &\simeq P_1 P_2 - A(1)^\dagger_\lambda N_{21} A(2)^\lambda .
\end{align}
Again using the information about the ket to the right, one can straightforwardly show that
$ A(1)^\dagger_\lambda N_{21} A(2)^\lambda \to N_{21} N_{12} - P_2 $ in this expression.
We have now ``fully evaluated'' the first term of \eqref{eq: TAm TAp commutator as three terms} as
\begin{align}
    \label{eq: TAm TAp commutator, 1-2 term}
    \epsilon^{\lambda \mu \nu} A(1)^\dagger_\lambda A(2)^\dagger_\mu \bigl[ T_A ,\, A(3)^\dagger_\nu \bigr] &\to P_1 P_2 + P_2 - N_{21} N_{12} ,
\end{align}
by which we mean that it is written entirely in terms of LSH gauge singlet operators.

For the second term of \eqref{eq: TAm TAp commutator as three terms}, similar steps and substitutions can be used to obtain
\begin{align}
    \epsilon^{\lambda \mu \nu} A(1)^\dagger_\lambda \bigl[ T_A ,\, A(2)^\dagger_\mu \bigr] A(3)^\dagger_\nu &\simeq P_1 A(3) \cdot A(3)^\dagger - N_{13} N_{31} .
\end{align}
Here one should use another substitution that is valid here but not in general: $A(3) \cdot A(3)^\dagger \to P_3 + 3$.
So then the second term of \eqref{eq: TAm TAp commutator as three terms} is evaluated as
\begin{align}
    \label{eq: TAm TAp commutator, 3-1 term}
    \epsilon^{\lambda \mu \nu} A(1)^\dagger_\lambda \bigl[ T_A ,\, A(2)^\dagger_\mu \bigr] A(3)^\dagger_\nu \to P_1 ( P_3 + 3 ) - N_{13} N_{31}.
\end{align}
The third term of \eqref{eq: TAm TAp commutator as three terms} can be evaluated using similar steps to those used above, arriving at
\begin{align}
    \label{eq: TAm TAp commutator, 2-3 term}
    \epsilon^{\lambda \mu \nu} \bigl[ T_A ,\, A(1)^\dagger_\lambda \bigr] A(2)^\dagger_\mu A(3)^\dagger_\nu \to ( P_2 +3) ( P_3 +3) - ( P_2 +3) - N_{32} N_{23} .
\end{align}

Putting together the three intermediate results \eqref{eq: TAm TAp commutator, 1-2 term}, \eqref{eq: TAm TAp commutator, 3-1 term}, and \eqref{eq: TAm TAp commutator, 2-3 term}, and simplifying, it has been shown that
\begin{align}
    \bigl[ T_A,  T_A^{\dagger} \bigr] &\to P_1 P_2 + P_2 P_3 + P_3 P_1 + 3 \bigl( P_1 + P_2 + P_3 \bigr) + 6 - N_{21} N_{12} - N_{32} N_{23} - N_{13} N_{31} ,
\end{align}
for the purposes of \eqref{eq: t state normalization, summation}.
The final step to dealing with $\bigl[ T_A,  T_A^{\dagger} \bigr]$ is addressing the $N_{IJ} N_{JI}$ terms.
It is at this point that we invoke \ref{cor: NIJ on t states}, which tells us that any $N_{JI}$ applied to a state containing only $t$-type excitations will evaluate to zero.
Thus, the last three terms above are dropped.

Returning to the norm evaluation,
\begin{align}
    &\bbrakett{t|t} \nonumber \\
    &= \bbra{t-1} \sum_{k=0}^{t-1} T_A^{\dagger \, t-1-k}
    \bigl(
    P_1 P_2 + P_2 P_3 + P_3 P_1 + 3 \bigl( P_1 + P_2 + P_3 \bigr) + 6 
    \bigr)
    T_A^{\dagger \, k} \ket{0} \\
    &= \bbra{t-1} \sum_{k=0}^{t-1} T_A^{\dagger \, t-1-k} ( k^2 + k^2 + k^2 + 3(k+k+k)+6 ) T_A^{\dagger \, k} \ket{0} \\
    &= \sum_{k=0}^{t-1} (3k^2 + 9k +6) \bbrakett{t-1|t-1} \\
    &= (t+2) (t+1) t \, \bbrakett{t-1|t-1} .
\end{align}
By hypothesis, $\bbrakett{t-1|t-1} = \tfrac{1}{2} (t+1)! \, t! \, (t-1)!$, so that
\begin{align}
    \bbrakett{t|t} &= (t+2) (t+1) \, t \tfrac{1}{2} (t+1)! \, t! \, (t-1)! = \tfrac{1}{2} (t+2)! \, (t+1)! \, t!
\end{align}
\end{proof}

The next type of excitation to be included and normalized is a single $\ell$-type one:
\begin{align}
\label{eq: lij and t states}
    \kett{\ell_{IJ} , t} &= (L_{IJ}^{\dagger})^{\ell_{IJ}} \kett{t}.
\end{align}
Similar to the evaluation of $ \bbrakett{t | t }$, it is  convenient to first establish an  intermediate result as follows.
\begin{lemma}
\label{lem: Nij on l and t states}
    For two vertex legs $I$ and $J$, $I \neq J$, any integer $t$, and any integer $\ell \geq 0$, we have
\begin{align}
    N_{IJ} L_{IJ}^{\dagger \, \ell} T_A^{\dagger \, t} \ket{0} =0.
\end{align}
\end{lemma}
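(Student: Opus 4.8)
The plan is to prove the statement by induction on the exponent $\ell$, reducing the claim at each step to the already-established fact that $N_{IJ}$ annihilates the $t$-only states. The base case $\ell=0$ is precisely Corollary~\ref{cor: NIJ on t states}, which gives $N_{IJ}\kett{t}=0$ and thereby covers both the $T_A^\dagger$ form written here (for $t\geq0$) and its $T_B^\dagger$ counterpart; so the base case requires no new work. For the inductive step I would assume $N_{IJ}L_{IJ}^{\dagger\,\ell-1}T_A^{\dagger\,t}\ket{0}=0$ and peel off a single factor of $L_{IJ}^\dagger$:
\begin{align}
N_{IJ}L_{IJ}^{\dagger\,\ell}T_A^{\dagger\,t}\ket{0}
&= \bigl(L_{IJ}^\dagger N_{IJ} + [N_{IJ},L_{IJ}^\dagger]\bigr) L_{IJ}^{\dagger\,\ell-1}T_A^{\dagger\,t}\ket{0}.
\end{align}
The first term vanishes immediately by the inductive hypothesis, so the whole argument reduces to showing that the commutator term also contains $N_{IJ}$ acting on the lower-power state.

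The central computation is therefore the commutator $[N_{IJ},L_{IJ}^\dagger]$. Since the leg-$I$ creation operators appearing in $N_{IJ}=A^\dagger_\mu(I)A^\mu(J)$ and $L_{IJ}^\dagger=A^\dagger_\alpha(I)B^{\dagger\alpha}(J)$ mutually commute, the only nontrivial contribution arises from moving $A^\mu(J)$ past $B^{\dagger\alpha}(J)$ via the ISB relation $[A^\mu(J),B^{\dagger\alpha}(J)]\simeq -F_J\,B^{\dagger\mu}(J)A^\alpha(J)$ of Eq.~\eqref{eq: ABdagg-commutator}. I expect this to collapse cleanly: $F_J$ depends only on leg-$J$ number operators and so can be slid through the leg-$I$ factors, while the surviving index structure $A^\dagger_\mu(I)A^\dagger_\alpha(I)B^{\dagger\mu}(J)A^\alpha(J)$ reassembles into $L_{IJ}^\dagger$ (contracting $\mu$) times $N_{IJ}$ (contracting $\alpha$). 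The anticipated identity is
\begin{align}
[N_{IJ},L_{IJ}^\dagger] &\simeq -F_J\, L_{IJ}^\dagger N_{IJ}.
\end{align}
Substituting this back produces $-F_J\,L_{IJ}^\dagger N_{IJ} L_{IJ}^{\dagger\,\ell-1}T_A^{\dagger\,t}\ket{0}$, which again vanishes by the inductive hypothesis, closing the induction.

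The hard part will be the bookkeeping in the commutator: one must track precisely where each factor of $F_J$ sits and verify that the contracted structure genuinely regroups into $L_{IJ}^\dagger N_{IJ}$ rather than into some other gauge-singlet combination (for instance $N_{IJ}L_{IJ}^\dagger$ or a term involving $M_{IJ}$). Once the commutator is pinned down in the form $-F_J\,L_{IJ}^\dagger N_{IJ}$, no further input is needed, since both surviving pieces carry $N_{IJ}$ acting on $L_{IJ}^{\dagger\,\ell-1}T_A^{\dagger\,t}\ket{0}$ and the placement of $F_J$ to the far left is immaterial. I therefore do not anticipate any difficulty beyond this operator-ordering check, and the result should follow in the same spirit as the two preceding Lemmas.
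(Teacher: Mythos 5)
Your proposal is correct and follows essentially the same route as the paper: induction on $\ell$ with base case given by Corollary~\ref{cor: NIJ on t states}, peeling off one $L_{IJ}^\dagger$, and invoking the commutator identity $[N_{IJ},L_{IJ}^\dagger]\simeq -F_J\,L_{IJ}^\dagger N_{IJ}$ so that both resulting terms vanish by the inductive hypothesis. The paper simply quotes that commutator identity without derivation, so your extra care about the placement of $F_J$ is a harmless (and correct) elaboration.
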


\begin{proof}
 We use induction on $\ell$.

For the base case of $\ell=0$, the assertion is that $N_{IJ} T_A^{\dagger \, t} \ket{0} =0$, which is nothing but the statement of \ref{cor: NIJ on t states}.

Now we move on to $\ell \geq 1$. Evaluating gives
\begin{align}
    N_{IJ} L_{IJ}^{\dagger \, \ell} T_A^{\dagger \, t} \ket{0} 
    &=  N_{IJ} L_{IJ}^\dagger L_{IJ}^{\dagger \, \ell-1 } T_A^{\dagger \, t} \ket{0} \\
    &=  [ N_{IJ} , L_{IJ}^\dagger ] L_{IJ}^{\dagger \, \ell-1 } T_A^{\dagger \, t} \ket{0} ,
\end{align}
where we have used the inductive hypothesis for $\ell-1$.
Using the commutator identity $[ N_{IJ} , L_{IJ}^\dagger ] \simeq - F_J L_{IJ}^\dagger N_{IJ}$, we have
\begin{align}
    N_{IJ} L_{IJ}^{\dagger \, \ell} T_A^{\dagger \, t} \ket{0} 
    &= - F_J L_{IJ}^\dagger N_{IJ} L_{IJ}^{\dagger \, \ell-1 } T_A^{\dagger \, t} \ket{0} \\
    &= 0,
\end{align}
where we have again used the inductive hypothesis for $\ell-1$.
This concludes the proof.
\end{proof}

\begin{theorem}
    The norm-squared of states $\kett{\ell_{IJ}, t}$ in \eqref{eq: lij and t states} is
\begin{align}
    \label{eq: lij t state norm squared}
     \bbrakett{\ell_{IJ} , t | \ell_{IJ}, t } &= \tfrac{1}{2} (\ell_{IJ}+|t|+2) \ell_{IJ}! \, |t|! \, (\ell_{IJ}+|t|+1)! \, (|t|+1)! \nonumber \\
    &= \tfrac{1}{2} (\ell_{IJ}+|t|+2) ! \, (|t|+1)! \, \ell_{IJ}! \, |t|! \, . 
\end{align}
\end{theorem}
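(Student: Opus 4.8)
The plan is to establish \eqref{eq: lij t state norm squared} by induction on $\ell_{IJ}$, mirroring the recursion used for the $t$-type states. First I would fix $t\geq 0$ without loss of generality---the $t<0$ case follows from the $A\leftrightarrow B$ (together with $I\leftrightarrow J$) symmetry of the singlet operators---so that $\kett{\ell_{IJ},t}=L_{IJ}^{\dagger\,\ell_{IJ}}T_A^{\dagger\,t}\ket{0}$. The base case $\ell_{IJ}=0$ is exactly the norm \eqref{eq: t state norm squared} already proved. For the inductive step I would strip one factor of $L_{IJ}^\dagger$ from each side and write
\begin{align}
\bbrakett{\ell_{IJ}, t | \ell_{IJ}, t} &= \bbra{\ell_{IJ}-1,t}\, L_{IJ} L_{IJ}^\dagger \,\kett{\ell_{IJ}-1,t} .
\end{align}
The key simplification is that $L_{IJ}\kett{t}=0$: because $\kett{t}=T_A^{\dagger t}\ket{0}$ carries $Q_J=0$, the annihilation operator $B_\alpha(J)$ inside $L_{IJ}=A^\alpha(I)B_\alpha(J)$ annihilates it. This lets me trade $L_{IJ}L_{IJ}^{\dagger\,\ell_{IJ}}\kett{t}$ for the pure commutator $[L_{IJ},L_{IJ}^{\dagger\,\ell_{IJ}}]\kett{t}$.

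Next I would expand the commutator telescopically, $[L_{IJ},L_{IJ}^{\dagger\,\ell_{IJ}}]=\sum_{k=0}^{\ell_{IJ}-1} L_{IJ}^{\dagger\,(\ell_{IJ}-1-k)}\,[L_{IJ},L_{IJ}^\dagger]\,L_{IJ}^{\dagger\,k}$, reducing everything to the single-rung commutator. Evaluating $[L_{IJ},L_{IJ}^\dagger]$ directly from the ISB relations \eqref{eq: AAdagg-commutator}--\eqref{eq: AdaggB-commutator}, and using that operators on distinct legs commute, I would collect
\begin{align}
[L_{IJ}, L_{IJ}^\dagger] &\simeq (P_I + Q_J + 3) - F_I Q_I - F_J N_{JI} N_{IJ} \nonumber \\
&\quad - F_I \sum_{\alpha,\beta} B^{\dagger\alpha}(I) B_\beta(I) B^{\dagger\beta}(J) B_\alpha(J) .
\end{align}
The crux of the argument is then showing the three correction terms annihilate the states $L_{IJ}^{\dagger\,k}\kett{t}$: the $F_I Q_I$ piece dies because $Q_I=0$ on these states; the $N_{JI}N_{IJ}$ piece dies because $N_{IJ}L_{IJ}^{\dagger\,k}T_A^{\dagger t}\ket{0}=0$ by Lemma~\ref{lem: Nij on l and t states}; and the final $B(I)$-bilinear dies because $B_\beta(I)$ can be commuted to the right, where it meets a $Q_I=0$ state and annihilates it. I expect this commutator reduction---tracking every $F$-weighted correction and invoking the Lemma to discard the $N$-type terms---to be the main obstacle; it is the direct analogue of, but a little more delicate than, the handling of $[T_A,T_A^\dagger]$ in the preceding proof.

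What remains is the diagonal piece $P_I+Q_J+3$, which is already a number operator: on $L_{IJ}^{\dagger\,k}\kett{t}$ it takes the eigenvalue $(t+k)+k+3=2k+t+3$. Summing over the telescope gives $\sum_{k=0}^{\ell_{IJ}-1}(2k+t+3)=\ell_{IJ}(\ell_{IJ}+t+2)$, hence the recursion
\begin{align}
\bbrakett{\ell_{IJ}, t | \ell_{IJ}, t} &= \ell_{IJ}(\ell_{IJ}+t+2)\,\bbrakett{\ell_{IJ}-1, t | \ell_{IJ}-1, t} .
\end{align}
Finally I would insert the inductive hypothesis $\bbrakett{\ell_{IJ}-1, t | \ell_{IJ}-1, t}=\tfrac12(\ell_{IJ}+t+1)!\,(t+1)!\,(\ell_{IJ}-1)!\,t!$ and use $(\ell_{IJ}+t+2)(\ell_{IJ}+t+1)!=(\ell_{IJ}+t+2)!$ and $\ell_{IJ}(\ell_{IJ}-1)!=\ell_{IJ}!$ to reproduce \eqref{eq: lij t state norm squared}, closing the induction.
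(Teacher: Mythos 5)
Your proposal is correct and follows essentially the same route as the paper's proof: induction on $\ell_{IJ}$ anchored on the $\kett{t}$ norm, reduction to the telescoped commutator via $L_{IJ}\kett{t}=0$, term-by-term elimination of the off-diagonal pieces using Lemma~\ref{lem: Nij on l and t states} and the $Q_I=0$ property, and summation of the diagonal eigenvalue. Your form of $[L_{IJ},L_{IJ}^\dagger]$ is an equivalent repackaging of the paper's identity $3+P_I+Q_J-F_IF_JL_{JI}^\dagger L_{JI}-F_IM_{IJ}M_{JI}-F_JN_{JI}N_{IJ}$, and your eigenvalue $2k+|t|+3$ is the correct one (the paper's intermediate line displaying $3+|t|+k$ appears to be a typo, as its final sum $(\ell_{IJ}+|t|+2)\ell_{IJ}$ agrees with yours).
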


\begin{proof}
We prove the formula by induction on $\ell_{IJ} $.

For the base case $\ell_{IJ}=0$, the squared norm evaluates as
\begin{align}
    \bbrakett{0,t|0,t} &= \bbrakett{t|t} \\
    &= \tfrac{1}{2} (|t|+2)! \, (|t|+1)! \, |t|!
\end{align}
This agrees with the right-hand side of \eqref{eq: lij t state norm squared} when $\ell_{IJ} \to 0$.

We now proceed to the inductive step, taking $\ell_{IJ} \geq 1$ and assuming that \eqref{eq: lij t state norm squared} holds for $\kett{\ell_{IJ}-1,t}$.
The norm evaluation begins with
\begin{align}
    \bbrakett{\ell_{IJ} , t | \ell_{IJ}, t }
    &= \bbra{\ell_{IJ} - 1 , t} L_{IJ} L_{IJ}^\dagger \kett{\ell_{IJ} - 1 , t} \\
    &= \bbra{\ell_{IJ} - 1 , t} L_{IJ} L_{IJ}^{\dagger \, \ell_{IJ}} \kett{ 0, t} \\
    &= \bbra{\ell_{IJ} - 1 , t} \bigl[ L_{IJ} , L_{IJ}^{\dagger \, \ell_{IJ}} \bigr] \kett{ 0, t} \\
    &= \bbra{\ell_{IJ} - 1 , t} \sum_{k=0}^{\ell_{IJ}-1} L_{IJ}^{\dagger \, \ell_{IJ}-1-k} \bigl[ L_{IJ} , L_{IJ}^{\dagger} \bigr] L_{IJ}^{\dagger \, k } \kett{ 0, t}.
\end{align}
To proceed, one needs the commutator identity
\begin{align}
    \bigl[ L_{IJ} , L_{IJ}^{\dagger} \bigr] &\simeq 3 + P_I + Q_J - F_I F_J L_{JI}^\dagger L_{JI} - F_I M_{IJ} M_{JI} - F_J N_{JI} N_{IJ} .
\end{align}
We will simplify the expression for $\bigl[ L_{IJ} , L_{IJ}^{\dagger} \bigr]$ by considering its action on the ket $L_{IJ}^{\dagger \, k } \kett{ 0, t} $:
\begin{itemize}
    \item $ P_I + Q_J $ evaluates to $ |t| + 2 k $.
    \item $L_{JI} $ evaluates to zero because $L_{IJ}^{\dagger \, k } \kett{ 0, t}$ is a $P_J=0$ or $Q_I=0$ eigenstate (depending on the sign of $t$), in which case the $A(J)$ or $B(I)$ annihilation components must evaluate to zero, respectively. Therefore, we can drop $F_I F_J L_{JI}^\dagger L_{JI} $ from the commutator.
    \item Considering the two terms $F_I M_{IJ} M_{JI}$ and $F_J N_{JI} N_{IJ}$, at least one of these is necessarily zero, depending on the value of $t$;
    without loss of generality, take $t \geq 0$.
    Then $F_I M_{IJ} M_{JI}$ must evaluate to zero due to the lack of any $B(I)^\dagger$ components to the right of $M_{JI}$.
    
    We also have that $N_{IJ} L_{IJ}^{\dagger \, k } \kett{ 0, t} =0$ by \ref{lem: Nij on l and t states}, allowing us to drop $F_J N_{JI} N_{IJ}$ as well.
\end{itemize}
Thus,
\begin{align}
    \bbrakett{\ell_{IJ} , t | \ell_{IJ}, t }
    &= \bbra{\ell_{IJ} - 1 , t} \sum_{k=0}^{\ell_{IJ}-1} L_{IJ}^{\dagger \, \ell_{IJ}-1-k} \bigl[ L_{IJ} , L_{IJ}^{\dagger} \bigr] L_{IJ}^{\dagger \, k } \kett{ 0, t} \\
    &= \bbra{\ell_{IJ} - 1 , t} \sum_{k=0}^{\ell_{IJ}-1} L_{IJ}^{\dagger \, \ell_{IJ}-1-k} ( 3 + |t| + 2k - 0 - 0 - 0) L_{IJ}^{\dagger \, k } \kett{ 0, t} \\
    &= \bbrakett{ \ell_{IJ} - 1 , t |  \ell_{IJ}-1 , t } \sum_{k=0}^{\ell_{IJ}-1} ( 3 + |t| + 2k ) \\
    &= \bbrakett{ \ell_{IJ} - 1 , t |  \ell_{IJ}-1 , t } (\ell_{IJ}+|t|+2) \ell_{IJ} \\
    &= \tfrac{1}{2} (\ell_{IJ}+|t|+1) ! \, (|t|+1)! \, (\ell_{IJ}-1)! \, |t|! (\ell_{IJ}+|t|+2) \ell_{IJ} \\
    &= \tfrac{1}{2} (\ell_{IJ}+|t|+2) ! \, (|t|+1)! \, \ell_{IJ}! \, |t|!.
\end{align}
This confirms \eqref{eq: lij t state norm squared}.
\end{proof}
As one adds in more and more excitations, more and more gauge-singlet contracted operators appear throughout the evaluation and increasingly cannot be replaced with zero and must be ``fully evaluated''.
However, the general approach to proving the norm does otherwise carry over as in the above two examples.

\section{\label{app: extra results}Some results in four-dimensional subspaces}

Here we present overlap matrices for a few four-dimensional subspaces, which provide sufficient information to construct an orthogonal basis for them.

\noindent $\vecpq=(3,3,3,3,3,3)$:\\
\begin{align}
    &\left( \begin{array}{c}
    \bbra{0\,0\,0;3\,3\,3;0 }  \\
    \bbra{1\,1\,1;2\,2\,2;0 }  \\
    \bbra{2\,2\,2;1\,1\,1;0 }  \\
    \bbra{3\,3\,3;0\,0\,0;0 }  \\
    \end{array} \right)
    \left( \begin{array}{cccc}
    \kett{0\,0\,0;3\,3\,3;0 } , &
    \kett{1\,1\,1;2\,2\,2;0 } , &
    \kett{2\,2\,2;1\,1\,1;0 } , &
    \kett{3\,3\,3;0\,0\,0;0 }
    \end{array} \right)
    \nonumber \\
    &= \left(
    \begin{array}{cccc}
     \frac{2630651904}{245} & -\frac{468467712}{245} & \frac{211673088}{245} & -\frac{131383296}{245} \\
     -\frac{468467712}{245} & \frac{503156736}{245} & -\frac{305086464}{245} & \frac{211673088}{245} \\
     \frac{211673088}{245} & -\frac{305086464}{245} & \frac{503156736}{245} & -\frac{468467712}{245} \\
     -\frac{131383296}{245} & \frac{211673088}{245} & -\frac{468467712}{245} & \frac{2630651904}{245} \\
    \end{array}
\right) .
\end{align}

\noindent $\vecpq=(4,3,3,4,3,3)$:\\
\begin{align}
    &\left( \begin{array}{c}
    \bbra{1\,0\,0;3\,3\,3;0 }  \\
    \bbra{2\,1\,1;2\,2\,2;0 }  \\
    \bbra{3\,2\,2;1\,1\,1;0 }  \\
    \bbra{4\,3\,3;0\,0\,0;0 }  \\
    \end{array} \right)
    \left( \begin{array}{cccc}
    \kett{1\,0\,0;3\,3\,3;0 } , &
    \kett{2\,1\,1;2\,2\,2;0 } , &
    \kett{3\,2\,2;1\,1\,1;0 } , &
    \kett{4\,3\,3;0\,0\,0;0 }
    \end{array} \right)
    \nonumber \\
    &=
\left(
\begin{array}{cccc}
 \frac{5165239104}{49} & -\frac{1415636352}{49} & \frac{731472768}{49} & -\frac{480370176}{49} \\
 -\frac{1415636352}{49} & \frac{1564344576}{49} & -\frac{1068297984}{49} & \frac{786060288}{49} \\
 \frac{731472768}{49} & -\frac{1068297984}{49} & \frac{1807189056}{49} & -\frac{1797748992}{49} \\
 -\frac{480370176}{49} & \frac{786060288}{49} & -\frac{1797748992}{49} & \frac{10890630144}{49}. \\
\end{array}
\right)
\end{align}

\noindent $\vecpq=(4,3,4,4,3,4)$:\\
\begin{align}
    &\left( \begin{array}{c}
    \bbra{1\,1\,0;3\,3\,3;0 }  \\ 
    \bbra{2\,2\,1;2\,2\,2;0 }  \\
    \bbra{3\,3\,2;1\,1\,1;0 }  \\
    \bbra{4\,4\,3;0\,0\,0;0 }  \\
    \end{array} \right)
    \left( \begin{array}{cccc}
    \kett{1\,1\,0;3\,3\,3;0 } , &
    \kett{2\,2\,1;2\,2\,2;0 } , &
    \kett{3\,3\,2;1\,1\,1;0 } , &
    \kett{4\,4\,3;0\,0\,0;0 }
    \end{array} \right)
    \nonumber \\
    &= \left(
\begin{array}{cccc}
 1055592000 & -\frac{2970432000}{7} & \frac{1743768000}{7} & -\frac{1213056000}{7} \\
 -\frac{2970432000}{7} & \frac{3382560000}{7} & -369360000 & \frac{2021760000}{7} \\
 \frac{1743768000}{7} & -369360000 & \frac{4502304000}{7} & -\frac{4805568000}{7} \\
 -\frac{1213056000}{7} & \frac{2021760000}{7} & -\frac{4805568000}{7} & \frac{31726080000}{7} \\
\end{array}
\right) .
\end{align}

\bibliography{part-1.bib}
\end{document}